\documentclass[11pt,a4paper]{article}
\usepackage{amsthm,amsmath}
\usepackage{amssymb,amstext,amsbsy,bbm,natbib}
\usepackage{color}
\usepackage{verbatim}
\usepackage{ bm}

\usepackage{times}
\let\cite=\citet
\usepackage{graphicx}
\newcommand{\Rlogo}{\protect\includegraphics[height=1.8ex,keepaspectratio]{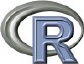}}

\newtheorem{thm}{Theorem}


\oddsidemargin=0.25in
\evensidemargin=0in
\textwidth=6in
\headheight=0pt
\headsep=0pt
\topmargin=0in
\textheight=9in

\newcommand*{\nrm}[1]{\left\| #1 \right\|}  
\newcommand*{\abs}[1]{\left| #1 \right|} 
\newcommand{\E}{\mbox{$\mathbb{E}$}}

\newcommand{\PP}{\mbox{$\mathbb{P}$}}

\newcommand{\Rset}{\mathbb{R}}
\newcommand{\Nset}{\mathbb{N}}

\newcommand{\CF}{\mathcal{F}}

\newcommand\1{\leavevmode\hbox{\rm \small1\kern-0.35em\normalsize1}}
\newcommand\ind[1]{\1_{\{#1\}}}
\newcommand\egaldef{\stackrel{\mbox{\upshape\tiny def}}{=}}

%
%
\begin{document}

\title{A fast and recursive algorithm for clustering large datasets with $k$-medians}
\author{Herv\'e Cardot\footnote{Corresponding author.}$^{\ (a)}$, Peggy C\'enac$^{(a)}$ and Jean-Marie Monnez$^{(b)}$ \\
(a) Institut de Math\'ematiques de Bourgogne, UMR 5584, Universit\'e de Bourgogne, \\ 9 Avenue Alain Savary, 21078 Dijon, France \\
(b) Institut Elie Cartan, UMR 7502, Nancy Universit\'e, CNRS, INRIA, \\ B.P. 239-F 54506 Vandoeuvre l\`{e}s Nancy Cedex, France}

\maketitle

\begin{abstract}
Clustering with fast algorithms large samples of high dimensional data is an important challenge  in computational statistics. Borrowing ideas from \cite{MacQueen} who introduced a sequential version of the $k$-means algorithm,   a new class of recursive stochastic gradient algorithms  designed for the $k$-medians loss criterion is proposed. By their recursive nature, these algorithms are very fast and are well adapted to deal with large samples of data that are allowed to arrive sequentially. It is proved that the stochastic gradient algorithm converges  almost surely to the set of stationary points of the underlying loss criterion. A particular attention is paid to the averaged versions,  which are known to have better performances, and a data-driven procedure that allows automatic selection of the value of the descent step is proposed.
 The performance of the averaged sequential estimator is compared on a simulation study, both in terms of computation speed and accuracy of the estimations, with more classical partitioning techniques  such as $k$-means, trimmed $k$-means and PAM (partitioning around medoids).
Finally, this new online clustering technique is illustrated on determining television audience profiles with a sample of more than 5000 individual television audiences measured every minute over a period of 24 hours.
\end{abstract}

\noindent \textit{Keywords}:  averaging, high dimensional data, $k$-medoids, online clustering, partitioning around medoids, recursive estimators,  Robbins Monro, stochastic approximation, stochastic gradient.

\section{Introduction}\label{sec:intro}

Clustering with fast algorithms large samples of high dimensional data is an important challenge  in computational statistics and  machine learning, with applications in various domains such as image analysis, biology or computer vision. There is a vast literature on clustering techniques and  recent discussions and reviews may be found in  \cite{JMF99} or \cite{GanMaWu2007}. 
Moreover, as argued in \cite{Bot10}, the development of  fast  algorithms is even more crucial when the computation time is limited and the sample is potentially very large, since fast procedures will be able  to deal with a larger number of observations and will finally provide better estimates than slower ones. 

\medskip

We focus here on partitioning  techniques which  are able to deal with large samples of data, assuming  the number $k$ of clusters is fixed in advance.  
The most popular clustering methods  are  probably the non sequential (\cite{Forgy65}) and the sequential (\cite{MacQueen}) versions of the $k$-means algorithms.
They are very fast and only require $O(kn)$ operations, where $n$ is the sample size. They aim at finding local minima of  a quadratic criterion and the cluster centers are given by the barycenters of the elements belonging to each cluster. 
A  major drawback of the $k$-means algorithms is that they are based on  mean values and, consequently, are very sensitive to  outliers. 
Such atypical values, which may not be uncommon in  large samples,    can deteriorate significantly the performances of these algorithms, even if they only represent a small fraction of the data as explained in \cite{GEGMM10} or \cite{CGVZ07}.
The $k$-medians approach is  a first attempt to get more robust clustering algorithms; it was suggested by \cite{MacQueen}  and developed by \cite{KauRou90}. It  consists in considering criteria based on least norms instead of least squared norms, so that the cluster centers are the spatial medians, also called geometric or $L_1$-medians (see \cite{Sma90}), of the elements belonging to each cluster. Note that it has been proved in \cite{Laloe2010} that under general assumptions, the minimum of the objective function is unique.  Many algorithms have been proposed in the literature to find this minimum. The most popular one is certainly  the PAM (partitioning around medoids) algorithm which has been developed by \cite{KauRou90} in order to search for local minima  among the elements of the sample. Its computation time  is $O(kn^2)$ and as a consequence,  it is not very well adapted for large sample sizes.  Many strategies have been suggested in the literature to reduce the computation time of this algorithm. For example subsampling (see \textit{e.g} the algorithm CLARA in \cite{KauRou90} and  the algorithm CLARANS in \cite{clarans02}), local distances computation (\cite{ZCoul2005}) or the use of weighted distances during the iteration steps (\cite{ParkJun2008}), allow  one to reduce significantly the computation time without deteriorating the accuracy of the estimated partition.

Trimmed $k$-means (see \cite{GGMM2008, GEGMM10} and references therein) is also a popular modification of the $k$-means algorithm that is more robust (see \cite{GG99}) in the sense that it has a strictly positive breakdown point, which is not the case for the $k$-medians.  Note however that the breakdown point  is a pessimistic indicator of robustness since it is based on the worst possible scenario. For a small fraction of outliers whose distance is moderate to the cluster centers, $k$-medians remain still competitive compared to trimmed $k$-means as seen in the simulation study. Furthermore, from a computational point of view,  performing trimmed $k$-means needs to sort the data  and this step requires $O(n^2)$ operations, in the worst cases, at each iteration so that its execution time can get large when one has to  deal with large samples.

\medskip
  
Borrowing ideas from  \cite{MacQueen} and \cite{hart75} who have first proposed sequential clustering algorithms  and \cite{CCZ10} who have studied the properties of stochastic gradient algorithms that can give efficient recursive estimators of the geometric median in high dimensional spaces, we propose in this paper a recursive strategy that is able to estimate the cluster centers by minimizing a $k$-medians type criterion.
One of the main advantages of our approach, compared to previous ones, is that it can be computed in only $O(kn)$ operations so that it can deal with very large datasets and is more robust than the $k$-means. Note also that by its recursive nature, another important feature is that it  allows automatic update and does not need to store all the data.
A key tuning parameter in our algorithm is the descent step value. We found empirically that reasonable values are given by the empirical $L_1$ loss function. We thus also consider an automatic two steps procedure in which one first runs  the sequential version of the $k$-means in order to approximate the value of the $L_1$ loss function and then run our stochastic $k$-medians with an appropriate descent step.

\medskip

The paper is organized as follows. We first fix notations and present our algorithm. In the third Section, we state the almost sure consistency of the stochastic gradient $k$-medians to a stationary point of the underlying objective function. The proof heavily relies on \cite{Monnez}. In Section 4, we compare on simulations the performance of our technique with the sequential $k$-means, the PAM algorithm and the trimmed $k$-means when the data are contaminated by a small fraction of outliers. We note that applying averaging techniques (see \cite{PolyakJud92})  to our estimator, with a small number of different  initializations points, is a very competitive approach even for moderate sample sizes with computation times that are much smaller.  In Section 5, we illustrate our new clustering algorithm on a large sample, of about 5000 individuals, in order to determine  profiles of television audience. A major difference with  PAM is that our algorithm searches for a solution in all the space whereas PAM, and its refinements CLARA and CLARANS, only look for a solution among the elements of the sample. Consequently, approaches such as PAM  are not adapted to deal with temporal data presented in Section 5 since the data mainly consist of 0 and 1 indicating that the television is switched on or switched off during each minute of the day. Proofs are gathered in the Appendix.

\section{The stochastic gradient $k$-medians algorithm}\label{sec:algo}
\subsection{Context and definitions}
Let $(\Omega, \mathcal A, \PP)$ be a probability space. Suppose we have a sequence of independent copies $Z_1, \ldots, Z_n$ of a random vector $Z$ taking values in $\Rset^d.$ The aim is to partition $\Omega$ into a finite number $k$ of clusters $\Omega_1, \ldots, \Omega_k$. Each cluster $\Omega_i$ is represented by its center, which is an element of $\Rset^{d}$ denoted by $\theta^i$. From a population point of view, the $k$-means and $k$-medians algorithms aim at finding local minima of  the function $g$ mapping $\Rset^{dk}$ to $\Rset$ and defined as follows, for $x=(x^1, \ldots, x^k)'$ with for all $i$, $x^{i}\in \Rset^{d}$,
\begin{eqnarray}
g(x) &\egaldef& \E \left( \min_{r=1, \ldots, k} \Phi( \nrm{Z - x^r}) \right), 
\label{def1:g}
\end{eqnarray}
where $\Phi$ is a real, positive, continuous and non-decreasing function and the norm $\nrm{.}$ in $\Rset^{d}$ takes account of the dimension $d$ of the data, for $z \in \Rset^{d}$, $\nrm{z}^2 = d^{-1} \sum_{j=1}^d z_j^2$. The particular case  $\Phi(u) = u^2,$ leads to the classical $k$-means algorithm, whereas $\phi(u)=|u|$ leads to the k-medians.

\medskip

Before presenting our new recursive algorithm, let us introduce now some notations and recall the recursive $k$-means algorithm developed by  \cite{MacQueen}.  Let us denote by $I_r$ the indicator function, 
\[I_r(z;x)=\prod_{j=1}^{k}\ind{\nrm{z-x^r}\leq \nrm{z-x^j}},\]
which is equal to one when $x^r$ is the nearest point to $z,$ among  the set of points $x^i,$ $i=1,\dots, k.$
The $k$-means recursive algorithm proposed by \cite{MacQueen} starts with $k$ arbitrary groups, each containing only  one point, $X_1^1, \ldots,X_1^k.$ Then, at each iteration, the cluster centers are updated as follows,
\begin{eqnarray}
\label{def:MacQueenAlgo}
X_{n+1}^r&=&X_n^r-a_n^rI_r(Z_n;X_n) \left(X_n^r-Z_n \right),
\end{eqnarray}
where for $n\geq 2$, $a_n^r = (1+n_r)^{-1}$ and  $n_r= 1+\sum_{\ell=1}^{n-1} I_r(Z_\ell;X_\ell)$ is just the number of elements allocated to cluster $r$ until iteration $n-1$. For $n=1$, let $a_1^r=\frac{1}{2}$. This also means that $X_{n+1}^r$ is simply the barycenter of the elements allocated to cluster $r$ until iteration $n,$
\[
X_{n+1}^r = \frac{1}{1 + \sum_{\ell=1}^{n} I_r(Z_\ell;X_\ell)} \left( X_1^r+ \sum_{\ell=1}^n I_r(Z_\ell;X_\ell) Z_\ell \right).
\]
The interesting point is that this recursive algorithm is very fast and can be seen as a Robbins-Monro  procedure.

\subsection{Stochastic gradient $k$-medians algorithms}

 Assuming $Z$ has an absolutely continuous distribution, we have 
 $$\PP(\nrm{Z-x^i}=\nrm{Z-x^j})=0, \quad \mbox{for any }i \neq j \mbox{ and }x^{i}\neq x^j.$$
Then, the  $k$-medians approach relies on looking for minima, that may be local, of the function $g$ which can also be written as follows, for any $x$ such that $x^{j}\neq x^{i}$ when $i\neq j$,
\begin{equation}
g(x)=\sum_{r=1}^k \E[I_r(Z;x)\nrm{Z-x^r}].
\label{def:fnobj}
\end{equation}
In order to get an explicit Robbins-Monro algorithm representation, it remains to exhibit the gradient of $g$. Let us write $g$ in  integral form. Denoting by   $f$ the density of the random variable $Z,$ we have,
\[g(x)=\sum_{r=1}^k \int_{\Rset^d\setminus \{x^r\}} I_r(z;x)\nrm{z-x^r} f(z) \ dz.\]
For $j=1, \ldots, d$, it can be checked easily that
\[\frac{\partial}{\partial x^{r}_j}\left(\nrm{z-x^r}\right)=\frac{x^{r}_j-z_j}{\nrm{z-x^r}},\]
and since
\[I_r(z;x)\frac{\abs{x^{r}_j-z_j}}{\nrm{z-x^r}}f(z) \leq f(z), \quad \mbox{for } z\neq x^r,\]
the partial derivatives satisfy,
\[
\frac{\partial g}{\partial x^{r}_j} (x) \ = \ \int_{\Rset^d\setminus \{x^r\} }I_r(z;x)\frac{x^{r}_j-z_j}{\nrm{z-x^r}}f(z)\ dz.
\] 
We define, for $x \in \Rset^{dk},$ 
\begin{equation}
\label{nabla}
\nabla_rg(x)\ \egaldef \ \E\left[I_r(Z;x)\frac{x^r-Z}{\nrm{x^r-Z}}\right].
\end{equation}

\medskip

We can now present our stochastic gradient $k$-medians algorithm. Given a set of  $k$ distinct initialization points in $\Rset^d,$ $X_1^1, \cdots, X_1^k,$ the set of $k$ cluster centers is updated at each iteration as follows.  
For $r=1, \ldots, k,$ and $n\geq 1,$
\begin{eqnarray}
\label{def:algosto1}
X_{n+1}^r&=&X_n^r-a_n^rI_r(Z_n;X_n)\frac{X_n^r-Z_n}{\nrm{X_n^r-Z_n}}\\
&=& X_n^r-a_n^r\nabla_r g(X_n)-a_n^rV_n^r, \nonumber
\end{eqnarray}
with $X_n = (X_n^1, \cdots, X_n^k),$ and 
\[
V_n^r\egaldef I_r(Z_n;X_n)\frac{X_n^r-Z_n}{\nrm{X_n^r-Z_n}}-\E\left[I_r(Z_n;X_n)\frac{X_n^r-Z_n}{\nrm{X_n^r-Z_n}}\Bigg|\CF_n\right],
\]
$\CF_n=\sigma(X_1, Z_1, \ldots, Z_{n-1}).$ The steps $a_n^r,$ also called gains, are supposed to be $\CF_n$-measurable. 
We denote  by $\nabla g(x)=\left(\nabla_1 g(x), \ldots, \nabla_k g(x) \right)'$ the gradient of $g$  and define  $V_n\egaldef (V_n^1, \ldots V_n^k)'$. Let $A_n$ be the diagonal matrix of size $dk \times dk,$
\[A_n=\left(\begin{array}{ccccccc}
a_n^1 &        &  & &&&\\
      & \ddots &  & &&&\\
      &        &a_n^1 &&&&\\
      &        &  & \ddots &&&\\ 
      &        &  & & a_n^k &&\\
      &        &  & & &\ddots &\\
     &        &  & & & &a_n^k\\
\end{array} \right),\]
each $a_n^r$  being repeated $d$ times. Then, the $k$-medians algorithm can be written in a matrix way,
\begin{equation}
\label{algo-vec}
X_{n+1}=X_n-A_n\nabla g(X_n)-A_nV_n,
\end{equation}
which is a classical stochastic gradient descent.

\subsection{Tuning the stochastic gradient $k$-medians and its averaged version}\label{ssec:stepsizes}
The behavior of algorithm (\ref{def:algosto1}) depends on  the sequence of steps $a_n^r,$ $r \in \{1, \ldots, k\}$ and the vector of initialization $X_1.$
These two sets of tuning parameters play distinct roles and we mainly focus on the choice of the step values, noting that, as for the $k$-means, the estimation results must be compared for different sets of initialization points in order to get a better estimation of the cluster centers. Assume we have a sample of $n$ realizations $Z_1, \ldots, Z_n$ of $Z$ and a set of initialization points of the algorithm, the  selected estimate of the cluster centers is the one minimizing the following empirical risk,
\begin{eqnarray}
\label{def:emprisk}
R(X_n) & =& \frac{1}{n}\sum_{i=1}^n\sum_{r=1}^k I_r(Z_i;X_n) \nrm{Z_i-X_n^r}
\end{eqnarray}

Let us denote  by $n_r = 1+ \sum_{\ell =1}^{n-1} I_r(Z_\ell;X_\ell)$ the number of updating steps for cluster $r,$ until iteration $n-1$, for $r \in \{1, \ldots, k\}.$ A classical form of the descent steps $a_n^r$ can be  given by 
\begin{eqnarray}
\label{def:anr}
a_n^r & = &  \left\{\displaystyle  \begin{array}{ll} a_{n-1}^r & \mbox{if } I_r(Z_n;X_n) = 0, \\  \displaystyle \frac{c_\gamma}{\left(1+c_\alpha n_r \right)^\alpha} & \mbox{ otherwise,} \end{array} \right. 
\end{eqnarray}
where $c_\gamma,$ $c_\alpha$ and $1/2 < \alpha \leq 1$ control the gain. 

Adopting an asymptotic point of view, one could believe that $\alpha$ should be set to $\alpha=1$ with suitable  constants $c_{\alpha}$ and $c_{\gamma}$, which are unknown in practice, in order to attain the optimal parametric rates of convergence of Robbins Monro algorithms (see \textit{e.g.} \cite{Duf97}, Th. 2.2.12).
Our experimental results on simulated data  have shown that the convergence of algorithm (\ref{def:algosto1}) with descent steps defined in (\ref{def:anr})  is then very sensitive to the values of the parameters $c_\gamma$ and $c_\alpha$ which have to be chosen very carefully. A simulation study performed in the particular case $k=1$  by \cite{CardCC10}  showed that the direct approach could lead to inaccurate  results and is nearly always less effective than the averaged algorithm presented below, even for well chosen descent step values. From an asymptotic point of view, it has been proved in  \cite{CCZ10} that the averaged stochastic gradient estimator of the geometric median, corresponding to $k=1,$ is asymptotically efficient under classical assumptions. Intuitively, when the algorithm is not too far from the solution, averaging allows to decrease substantially the variability of the initial algorithm which oscillates around the true solution and thus improves greatly its performances.

Consequently, we prefer to introduce an averaging step (see for instance \cite{PolyakJud92} or \cite{Pel00}), which  does not  slow down the algorithm and provides an estimator which is  much more effective. Our averaged estimator of the cluster centers, which remains recursive, is defined as follows, for $r \in \{1, \ldots, k\}$, $n\geq 1$, and for the value $X_{n+1}^r$ obtained by combining (\ref{def:algosto1}) and  (\ref{def:anr}), 
\begin{eqnarray}
\label{def:avekmed}
\bar{X}_{n+1}^r & = &  \left\{\displaystyle  \begin{array}{ll} \bar{X}_{n}^r & \mbox{if } I_r(Z_n;X_n) = 0, \\  \displaystyle \frac{n_r\bar{X}_{n}^r + X_{n+1}^r}{n_r+1} & \mbox{ otherwise,} \end{array} \right.  
\end{eqnarray}
with starting points $\bar{X}_1^r = X_1^r,$ $r=1, \ldots, k.$
Then standard choices (see \textit{e.g.} \cite{Bot10} and references therein) for $\alpha$ and $c_\alpha$ are  $\alpha=3/4$ and $c_\alpha=1,$ so that one only needs to select values for $c_\gamma.$

\section{Almost sure convergence of the algorithm}\label{sec:conv}
\subsection{A convergence theorem}

The following theorem is the main theoretical result of this paper. It states that the recursive algorithm defined in (\ref{algo-vec}) converges almost surely to the set of stationary points of the objective function defined in (\ref{def:fnobj}), under the following assumptions.
\begin{itemize}
\item[(H1)] a) The random vector $Z$ is absolutely continuous with respect to Lebesgue measure. \\
 b) $Z$ is bounded: $\exists K>0$: $\nrm{Z}\leq K$ a.s. \\
 c) $\exists C$: $\forall x \in \Rset^{d}$ such that $\nrm{x}\leq K+1$, $\E\left[\frac{1}{\nrm{Z-x}}\right]<C$. 
\item[(H2)] 
a) $\forall n \geq 1$,  $\min_ra_n^r>0$. \\
b) $\max_r\sup_n a_n^r < \min(\frac12, \frac{1}{8C})$ a.s. \\
c) $\sum_{n=1}^{\infty}\max_r a_n^r=\infty$ a.s. \\
d) $ \sup_{n}\frac{\max_ra_n^r}{\min_r a_n^r}<\infty \quad \mbox{a.s.}$
\item[(H3)] $\sum_{r=1}^k\sum_{n=1}^{\infty}\left(a_n^r\right)^2<\infty \quad \mbox{a.s.}$

\item[(H3')] $\sum_{r=1}^k\sum_{n=1}^{\infty}\E\left[\left(a_n^r\right)^2I_r(Z_n; X_n)\right]<\infty .$
\end{itemize}

\begin{thm}\label{thm-principal}
Assume that $X_1$ is absolutely continuous and that $\nrm{X_1^r}\leq K$, for $r=1, \ldots, k$. Then under Assumptions (H1a,c), (H2a,b), (H3) or (H3'), $g(X_n)$ and 
\[\sum_{r=1}^k\sum_{n=1}^{\infty}a_n^r\nrm{\nabla_r g(X_n)}^2\]
 converge almost surely. \\
 Moreover, if the hypotheses  (H1b) and (H2c,d) are also fulfilled then $\nabla g(X_n)$ and the distance between $X_n$ and  the set of stationary points of $g$ converge almost surely to zero.
\end{thm}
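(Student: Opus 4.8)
My plan is to read (\ref{algo-vec}) as a stochastic gradient descent driven by a martingale-difference noise and to invoke the general almost-sure convergence theorem of \cite{Monnez}, checking its structural hypotheses in turn. Two elementary facts are used throughout: writing $T_n^r \egaldef I_r(Z_n;X_n)\,\frac{X_n^r-Z_n}{\nrm{X_n^r-Z_n}}$, one has $\E[T_n^r\mid\CF_n]=\nabla_r g(X_n)$ and $\nrm{T_n^r}\le 1$, whence $\nrm{V_n^r}\le 2$ and $\nrm{\nabla_r g(x)}\le 1$. \textbf{Confinement.} I would first show that the trajectory stays in the ball $\{\nrm{x}\le K+1\}$: on $\{I_r(Z_n;X_n)=1\}$ the update reads $X_{n+1}^r=Z_n+(X_n^r-Z_n)\big(1-a_n^r/\nrm{X_n^r-Z_n}\big)$, so $X_{n+1}^r$ either lies on the segment $[Z_n,X_n^r]$ or overshoots $Z_n$ by at most $a_n^r\le\frac12$; together with $\nrm{Z_n}\le K$ (this is where boundedness of $Z$ enters) and $\nrm{X_1^r}\le K$, a straightforward induction gives $\nrm{X_n^r}\le K+1$ for all $n,r$. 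It is cleanest to establish this at the outset; the precise split of hypotheses between the two assertions is then inherited from \cite{Monnez}. Confinement places the iterates in the region where (H1c) supplies the integrability bound $\E[1/\nrm{Z-x}]<C$, and the ceiling on the gains in (H2b), through the constant $C$, is what makes the second-order remainder below controllable.

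\textbf{Regularity of $g$.} Next I would establish that on $\{\nrm{x}\le K+1\}$ the map $g$ of (\ref{def:fnobj}) is continuously differentiable with gradient (\ref{nabla}), and that $\nabla g$ is Lipschitz there. Differentiating a second time produces two contributions: from the unit field $\frac{x^r-z}{\nrm{x^r-z}}$ come terms of order $1/\nrm{x^r-z}$, integrable by (H1c); and from the Voronoi indicators $I_r(z;x)$ come surface integrals over the cell boundaries, finite because $Z$ is absolutely continuous (H1a). This yields a uniform bound $M$ on the second-order variation along the trajectory. \textbf{Descent and first conclusion.} A second-order expansion along $X_{n+1}-X_n=-A_nT_n$ then gives
\[
g(X_{n+1})\ \le\ g(X_n)-\langle \nabla g(X_n),A_nT_n\rangle+\frac{M}{2}\nrm{A_nT_n}^2 .
\]
Taking $\E[\,\cdot\mid\CF_n]$, using $\E[T_n\mid\CF_n]=\nabla g(X_n)$ and positivity of the diagonal $A_n$ (H2a), the first-order term is $-\sum_r a_n^r\nrm{\nabla_r g(X_n)}^2\le 0$, while $\E[\nrm{A_nT_n}^2\mid\CF_n]=\sum_r (a_n^r)^2\,\E[I_r(Z_n;X_n)\mid\CF_n]$, so that
\[
\E[g(X_{n+1})\mid\CF_n]\ \le\ g(X_n)-\sum_{r=1}^k a_n^r\nrm{\nabla_r g(X_n)}^2+\frac{M}{2}\sum_{r=1}^k (a_n^r)^2\,\E[I_r(Z_n;X_n)\mid\CF_n].
\]
Since $g\ge 0$ and the last term is summable — directly by (H3'), or by (H3) after bounding $\E[I_r\mid\CF_n]\le1$ — the Robbins--Siegmund theorem delivers the almost-sure convergence of $g(X_n)$ and the finiteness of $\sum_n\sum_r a_n^r\nrm{\nabla_r g(X_n)}^2$, i.e. the first assertion.

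\textbf{Convergence of the gradient and of the distance to $\CS$.} For the second assertion I would argue that the finiteness of $\sum_n\sum_r a_n^r\nrm{\nabla_r g(X_n)}^2$ together with $\sum_n\max_r a_n^r=\infty$ (H2c) forces $\liminf_n\nrm{\nabla g(X_n)}=0$, the ratio bound (H2d) ensuring that every cluster receives comparable gains so that no coordinate is starved. Since the increments tend to zero ($\nrm{X_{n+1}-X_n}\le\sqrt{k}\max_r a_n^r\to0$, a consequence of (H3)/(H3')) and $\nabla g$ is Lipschitz on the confinement region, a standard oscillation argument as in \cite{Monnez} upgrades this to $\nabla g(X_n)\to0$ a.s.; here the full strength of (H1b) is used. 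Finally, the trajectory being bounded it has limit points, and continuity of $\nabla g$ makes each of them stationary, so the distance between $X_n$ and the set $\CS$ of stationary points of $g$ tends to zero.

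\textbf{Main obstacle.} The delicate step is the regularity estimate of the second paragraph: bounding the second-order behaviour of $g$ requires taming both the $1/\nrm{x^r-z}$ singularity, absorbed by (H1c), and, more awkwardly, the moving Voronoi boundaries generated when the indicators $I_r$ are differentiated, whose boundary integrals are finite only by virtue of the absolute continuity (H1a). The $\liminf$-to-$\lim$ upgrade for the gradient is the second sensitive point, and is precisely where the quantitative gain conditions (H2b,c,d) and the constant $C$ of (H1c) are consumed.
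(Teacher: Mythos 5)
Your overall architecture (confinement of the iterates by induction, a descent inequality, a Robbins--Siegmund supermartingale argument, then an appeal to Monnez's theorem for the second assertion) is the same as the paper's, and your confinement induction and your treatment of the martingale noise term are essentially the ones used there. The genuine gap is in the step you yourself flag as the main obstacle: you derive the descent inequality from a uniform second-order bound $M$ on $g$, obtained by differentiating $\nabla_r g$ a second time, and in particular by differentiating the Voronoi indicators $I_r(z;x)$ in $x$. That differentiation produces surface integrals of the density $f$ over the bisector hyperplanes, and the assertion that these are finite --- let alone uniformly bounded over $x$ in the ball --- ``because $Z$ is absolutely continuous'' is not justified: (H1a) gives a density but no control on its trace on hyperplanes, and neither (H1b) nor (H1c) supplies such control (the density may be unbounded). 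So the constant $M$, hence your supermartingale inequality and everything downstream of it, is not established under the stated hypotheses. It also leaves (H2b) unexplained: the threshold $\min(\frac12,\frac1{8C})$ is calibrated to a specific one-sided descent constant $L=2C$, which your argument cannot produce.

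The paper avoids this entirely by never differentiating the indicators. Writing $A=X_n$ and $B=X_{n+1}$, it uses $\sum_r I_r(Z;A)=1$ a.s.\ and $\min_j\nrm{Z-B^j}\le\nrm{Z-B^r}$ to obtain $g(B)\le\sum_r\E\left[I_r(Z;A)\nrm{Z-B^r}\right]$: the partition is frozen at $A$, and one only has to expand the smooth map $x^r\mapsto\nrm{z-x^r}$ along the segment from $A^r$ to $B^r$. The increment of its gradient is bounded by $2\nrm{C^r-A^r}/\nrm{Z-A^r}$, and (H1c) then yields the one-sided inequality $g(B)-g(A)\le\langle B-A,\nabla g(A)\rangle+2C\nrm{B-A}^2$, which is exactly condition (A1b) of Monnez with $L=2C$ and explains (H2b); note that no Lipschitz property of $\nabla g$ is needed, only this one-sided quadratic upper bound. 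If you replace your regularity paragraph by this argument, the rest of your proof goes through with $M/2$ replaced by $2C$. One further small point: under (H3$'$) alone you cannot conclude $\max_r a_n^r\to0$ a.s.; the increments should instead be controlled through the exact identity $\nrm{X_{n+1}^r-X_n^r}=a_n^r I_r(Z_n;X_n)$ together with the a.s.\ summability of $(a_n^r)^2 I_r(Z_n;X_n)$ that (H3$'$) provides.
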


A direct consequence of Theorem 1 is that if the set of stationary points  of $g$ is finite,  then the sequence $(X_n)_n$ necessarily converges almost surely towards one of these stationary points because $X_{n+1}-X_n$ converges almost surely towards zero. By Cesaro means arguments, the averaged sequence $\bar{X}_n$  also converges almost surely towards the same stationary point.

\subsection{Comments on the hypotheses}
Note first that if the data do not arrive online and $X_1$  is chosen randomly among the sample units then $X_1$ is absolutely continuous and $\nrm{X_1^r}\leq K$, for $r=1, \ldots, k$ under (H1a,b). Moreover, 
the absolute continuity of $Z$ is a technical assumption that is required to get decomposition (\ref{def:fnobj}) of the $L_1$ error. 
Proving the convergence in the presence of atoms would require to decompose this error in order to take into account the  points which could have a non-null probability to be at the same distance. Such a study is clearly beyond the scope of the paper. Note however that in the simple case $k=1,$ 
it has been established in \cite{CCZ10} that the stochastic algorithm for the functional median is convergent provided that the distribution, which can be a mixture of a continuous and a discrete distribution, does not charge the median.

Hypothesis (H1c) is a stronger version of a more classical hypothesis needed to get consistent estimators of the spatial median (see \cite{Cha96}). As noted in \cite{CCZ10}, it is closely related to small ball properties of $Z$ and is fulfilled when
$$
\PP \left( \nrm{Z-x} \leq \epsilon \right) \leq \kappa \epsilon^2,
$$
for a constant $\kappa$ that does not depend on $x$ and $\epsilon$ small enough.
This implies in particular that  hypothesis (H1c) can be satisfied only when the dimension $d$ of the data satisfies $d \geq 2$.

Hypotheses (H2) and (H3) or (H3') deal with the stepsizes. Considering the general form of gains $a_n^r$ given in  (\ref{def:anr}), they are fulfilled when the sizes $n_r$ of all the clusters grow to infinity at the same rate and $\alpha \in ]1/2, 1]$.

\section{A simulation study}\label{sec:simu}

We first perform a simulation study to compare our recursive $k$-medians algorithm with the following well known clustering algorithms :  recursive version of the $k$-means (function \texttt{kmeans} in \Rlogo), trimmed $k$-means  (function \texttt{tkmeans}  in the \Rlogo \ package \texttt{tclust}, with a trimming coefficient $\alpha$ set to default, $\alpha=0.05$) and PAM (function \texttt{pam} in the \Rlogo \ package \texttt{cluster}). Our \Rlogo \ codes are available on request.

Comparisons are first made  according to the  value of the empirical $L_1$ error  (\ref{def:emprisk}) which must be as small as possible.
We note that the results of our averaged recursive procedure defined by (\ref{def:algosto1}),  (\ref{def:anr}) and (\ref{def:avekmed}) are very stable when the value of the tuning parameter $c_\gamma$ is not too far from the minimum value of the $L_1$ error, with $\alpha=3/4$ and $c_\alpha=1.$
This leads us to propose, in Section~\ref{sec:sensitivity},  an automatic clustering algorithm  which consists in first approximating the $L_1$ error with a recursive $k$-means and then performing our recursive $k$-medians with the selected value of $c_\gamma,$ denoted by $c$ in the following.
We have no mathematical justification for such an automatic choice of the tuning parameter $c$ but it always worked well on all our simulated experiments. This important point of our algorithm deserves further investigations that are beyond the scope of the paper.
Note however that this intuitive approach will certainly be ineffective when the dispersion is very different from one group to another.
It would then  be possible to consider refinements of the previous algorithm which would consist in considering different values of tuning parameter $c$  for the different clusters. 
We only present here a few simulation experiments which highlight both the strengths and the drawbacks of our recursive $k$-medians algorithm.

\subsection{Simulation protocol}

\subsection*{Simulation 1 : a simple experiment in $\mathbb{R}^2$}
We first consider a very simple case and draw independent realizations of variable $Z,$ 
\begin{eqnarray}
Z &=& (1-\epsilon) \left(\pi_1 Z_1 + \pi_2 Z_2+ \pi_3 Z_3\right) + \epsilon \delta_z,
\label{def:melange}
\end{eqnarray}
which is a mixture, with weights $\pi_1 = \pi_2 = \pi_3 = 1/3,$ of three bivariate random Gaussian vectors $Z_1$, $Z_2$ and $Z_3$ with mean vectors $\mu_1=(-3,-3),$  $\mu_2=(3,-3)$ and $\mu_3=(4.5,-4.5)$ and covariance matrices $Var(Z_1)= \left( \begin{array}{cc} 2 & 1 \\ 1 & 3 \end{array} \right),$   $Var(Z_2)= \left( \begin{array}{cc} 3 & 1 \\ 1 & 2 \end{array} \right)$ and  $Var(Z_3)= \left( \begin{array}{cc} 2 & -1 \\ -1 & 3 \end{array} \right).$ \\
Point $z =(-14,14)$ is an outlier and parameter $\epsilon$ controls the level of the contamination. 
 A sample of $n=450$ realizations of $Z$ is drawn in Figure~\ref{fig1}. 
 
\begin{figure}[htb]
\begin{center}
\includegraphics[height=11cm,width=11cm]{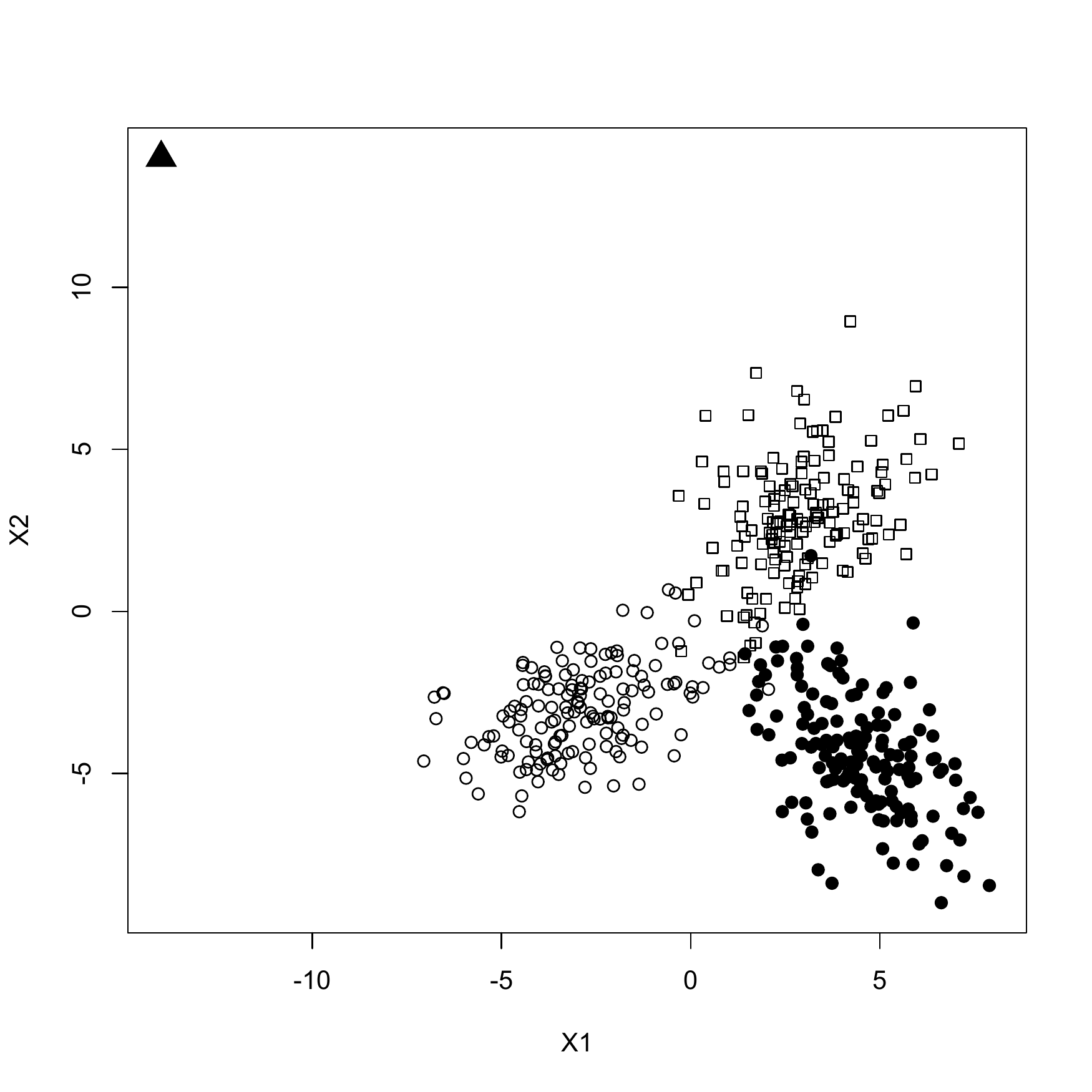} 
\caption{Simulation 1. A sample of $n=450$ realizations of $Z$. An outlier is located at position (-14,14).}
\label{fig1}
\end{center}
\end{figure}

\subsection*{Simulation 2 : larger dimension with different correlation levels}
\begin{figure}[htb]
\begin{center}
\includegraphics[height=11cm,width=11cm]{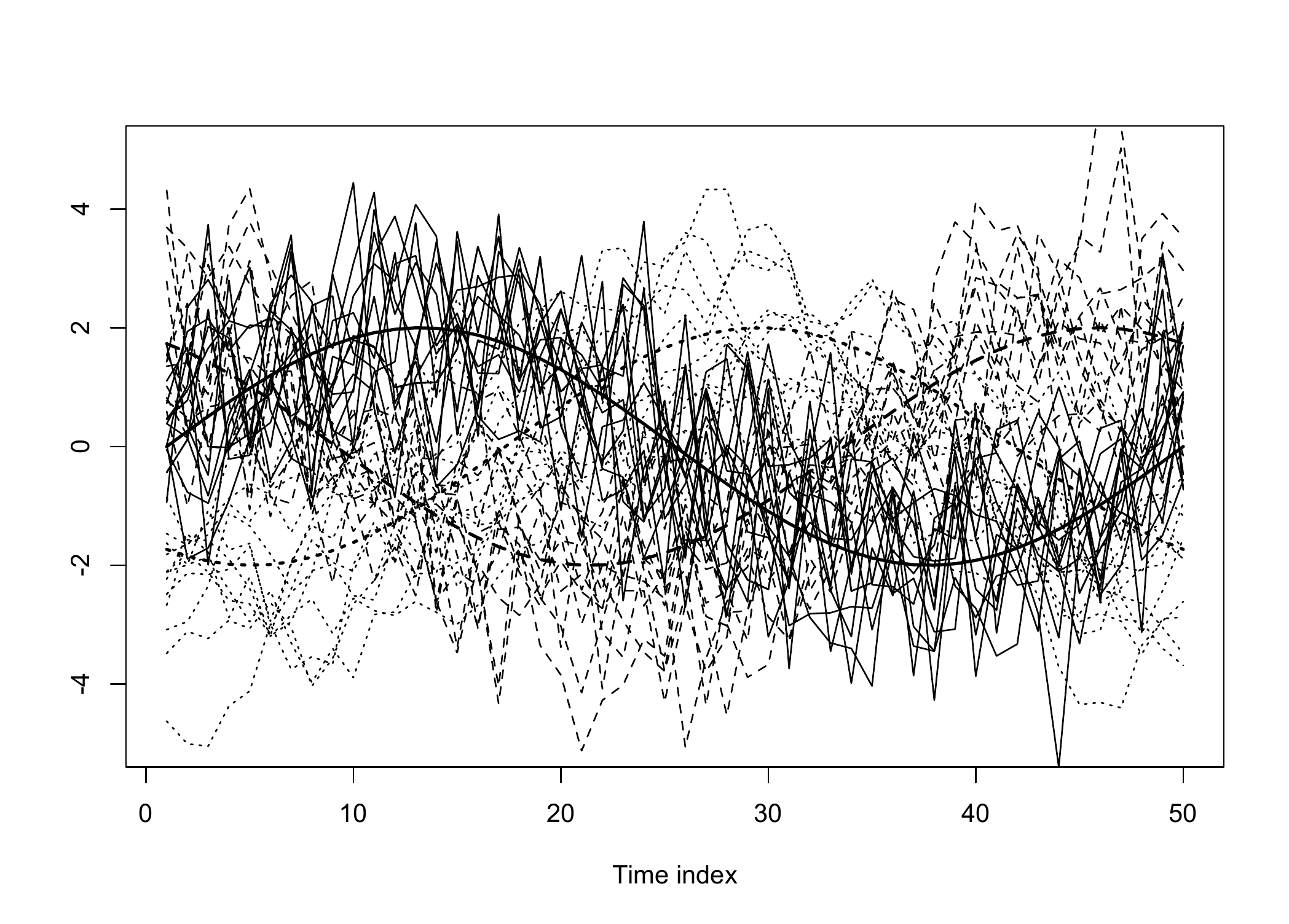} 
\caption{Simulation 2. A sample of $n=36$ realizations of $Z$ with $d=50$. The mean values $\mu_1,$ $\mu_2$ and $\mu_3$ of the three natural  clusters are drawn in bold lines.}
\label{figsim201}
\end{center}
\end{figure}

We also performed a simulation experiment, with a mixture of three Gaussian random variables as in  (\ref{def:melange}),  but in higher dimension spaces with correlation levels that vary from one cluster to another. Now, $Z_1,$ $Z_2$  and $Z_3$ are independent  multivariate normal distributions in $\mathbb{R}^{d},$ with means $\mu_{1j} = 2 \sin(2 \pi j/(d-1)),$  $\mu_{2j} = 2 \sin(2\pi/3 + 2 \pi j/(d-1)),$ and $\mu_{3j} = 2 \sin(4\pi/3 + 2 \pi j/(d-1)),$  for $j=1, \ldots, d.$ The covariance functions $Cov(Z_{ij},Z_{i\ell}) = 1.5 \rho_i^{|j- \ell|},$ for $j, \ell \in {1, \ldots, d}$ and $i \in \{1,3\}$ are controlled by a correlation parameter $\rho,$ with $\rho_1=0.1$, $\rho_2 = 0.5$ and $\rho_3=0.9.$ 
Note that this covariance structure corresponds to autoregressive processes of order one with autocorrelation $\rho.$
As before, $\delta_z = (4,\ldots, 4) \in \mathbb{R}^{d}$ plays the role of an outlying point.
A sample of $n=36$ independent realizations of $Z,$ without outliers, is  drawn in Figure \ref{figsim201} for a dimension $d=50.$

\subsection{$L_1$ error  and sensitivity to parameter $c$}
\label{sec:sensitivity}
As noted in  \cite{BryantWilliamson78}, comparing directly the distance of the estimates from the cluster centers $\mu_1,$ $\mu_2$ and $\mu_3$ may not be appropriate to evaluate a clustering method. 
Our comparison is thus first made in terms of the value of the empirical $L_1$ error (\ref{def:emprisk}) which should be as small as possible. 
For all methods, we considered that there were $k=3$ clusters.

\begin{figure}[htb]
\begin{center}
\includegraphics[height=11cm,width=11cm]{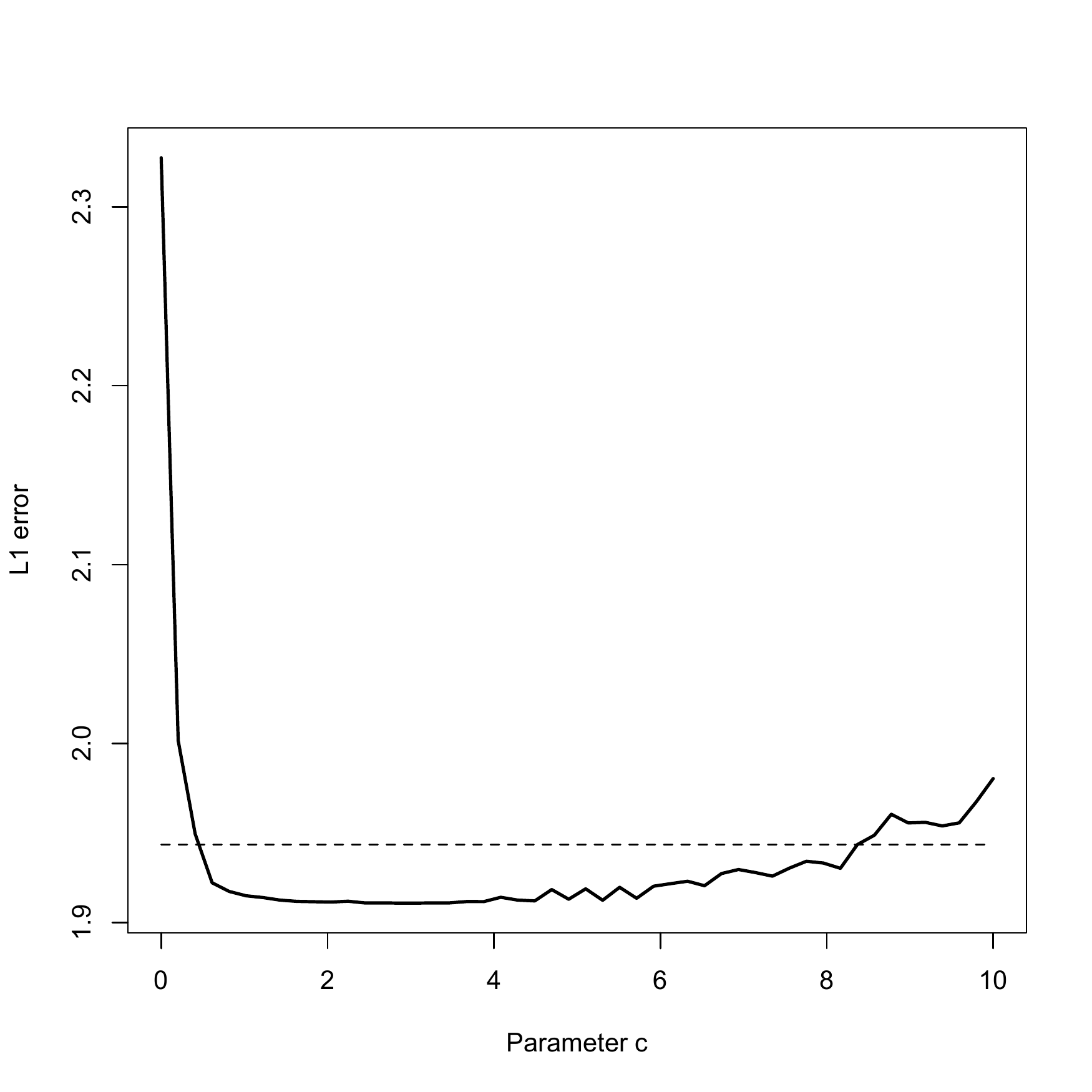} 
\caption{Simulation 1 with $\epsilon=0.05$ and $n=250.$ Mean empirical $L_1$ error (over 50 replications) for the PAM algorithm (dashed line), the $k$-means ($c=0$) and the stochastic $k$-medians (bold line), for $c \in ]0,10].$}
\label{figsim1L1}
\end{center}
\end{figure}

\medskip

We first study the simple case of Simulation 1. The empirical mean $L_1$ error of the PAM algorithm, the $k$-means and the averaged $k$-medians, for  50 replications of samples with sizes $n=250$ and a contamination level $\epsilon=0.05$ is presented in Figure~\ref{figsim1L1}. The number of initialization points equals 10 for both the $k$-means and the $k$-medians. When the descent parameter $c$ equals 0, the initialization 
point is given by the estimated centers by the $k$-means, so that the empirical $L_1$ error corresponds in that case to the $k$-means error, which is sightly above 2.31. We first note that this $L_1$ error is always larger, even if the contamination level is small, than the PAM and the $k$-medians errors, for $c \in ]0,10].$ Secondly, it appears that for $c \in [0.5, 4],$ the $k$-medians $L_1$ error is nearly constant and is clearly smaller than the $L_1$ error of the PAM algorithm. This means that, even if the sample size is relatively small ($n=250$), the recursive $k$-medians can perform well for values of $c$ which are of the same order of the $L_1$ error.

\medskip
 
 \begin{figure}[htbp]
\begin{center}
\includegraphics[height=11cm,width=11cm]{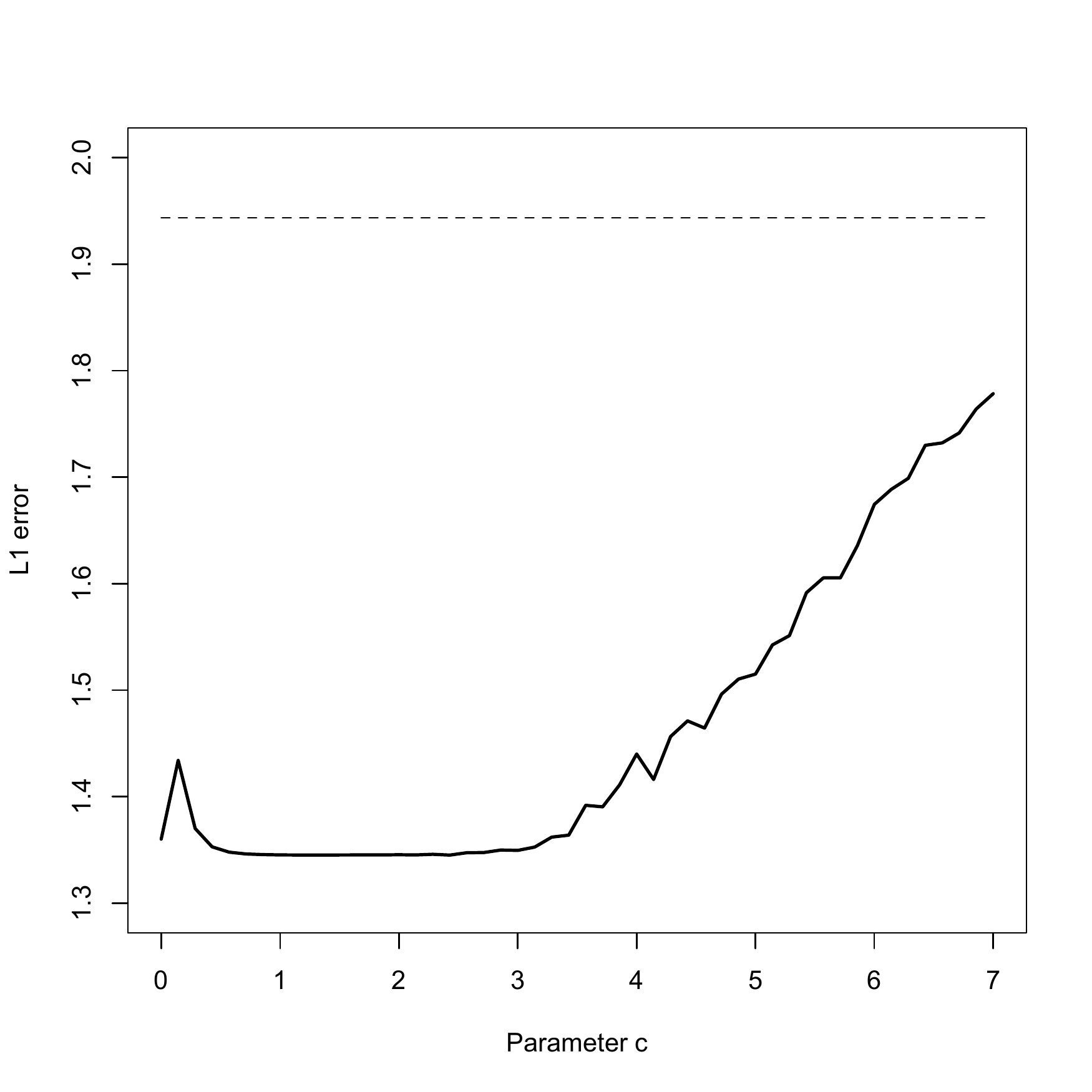} 
\caption{Simulation 2 with $n=500,$ $d=50,$ and $\epsilon=0.05.$ The mean empirical $L_1$ error (over 50 replications) is represented for the PAM algorithm (dashed line), the MacQueen version of the $k$-means ($c=0$) and the recursive $k$-medians estimator (bold line), for $c \in ]0,7].$}
\label{figsim2L1p50}
\end{center}
\end{figure}

We now consider 50 replications of samples drawn from the distribution described in simulation 2, with $n=500,$ $d=50$ and $\epsilon=0.05.$
The number of initialization points for the $k$-means and the $k$-medians is now equal to 25 and the empirical mean $L_1$ error is presented in Figure~\ref{figsim2L1p50}. We first note that the performances of the PAM algorithm clearly decrease with the dimension. The $k$-means performs better even if there are 5\% of outliers and if it is not designed to minimize an $L_1$ error criterion. This can be explained by the fact that PAM, as well as CLARA and CLARANS, look for a solution among the elements of the sample. Thus these approaches can hardly explore all the dimensions of the data when $d$ is large and $n$ is not large enough.   On the other hand, the $k$-medians and the $k$-means look for a solution in all $\mathbb{R}^d$ and are not restricted to the observed data and thus provide better results in terms of $L_1$ error.
As before, we can also remark  that the minimum error, which is around 1.36, is attained for $c$ in the interval $[0.5,3].$

\begin{figure}[htbp]
\begin{center}
\includegraphics[height=12cm,width=12cm]{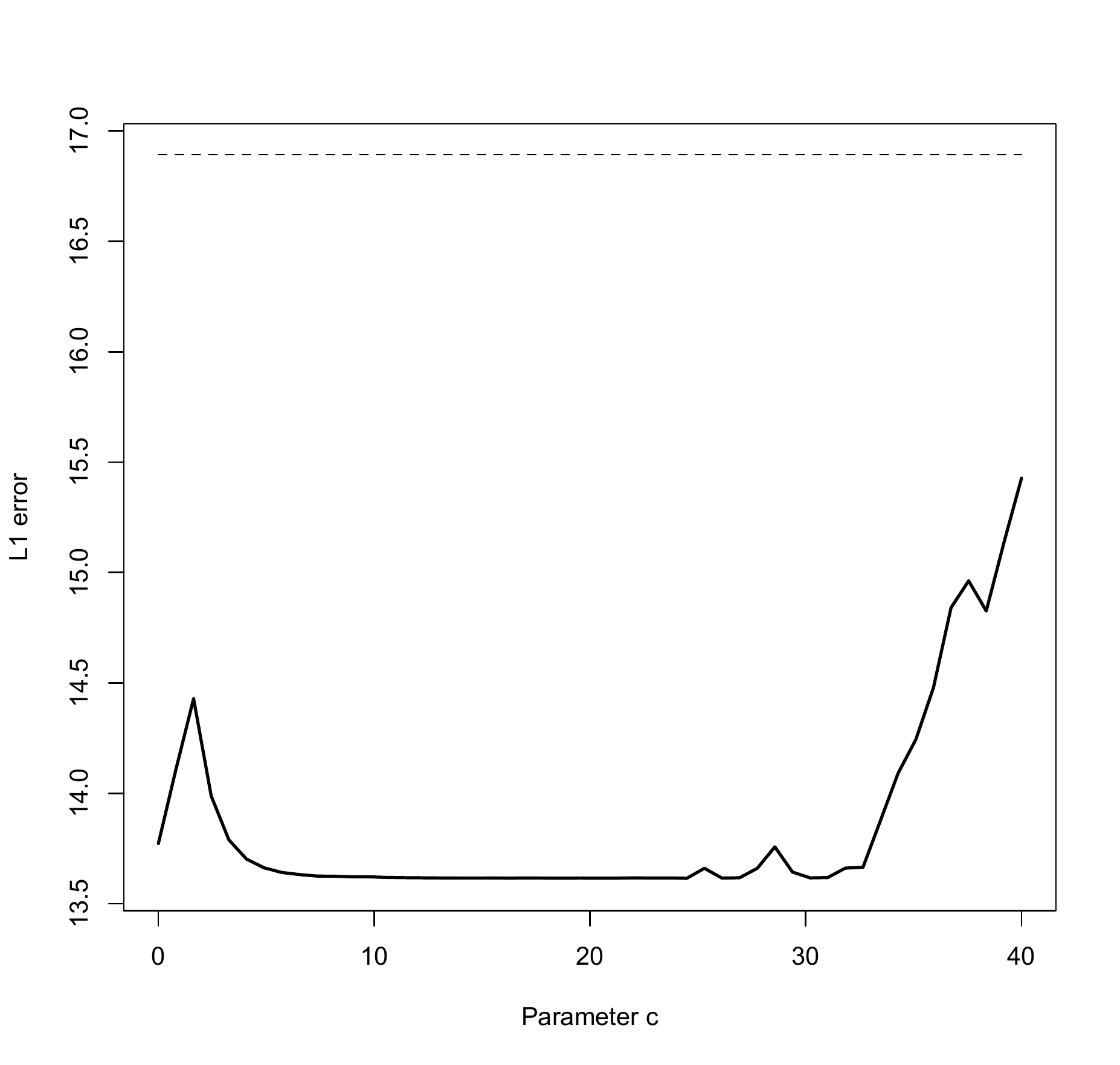} 
\caption{Simulation 2 with $n=1000,$  $d=200,$ $\epsilon=0.05,$ and $Z$ multiplied by a factor 10. The mean $L_1$ loss function (over 50 replications) is represented for the PAM algorithm (dashed line), the MacQueen version of the $k$-means ($c=0$) and our recursive $k$-medians estimator (bold line), for $c \in ]0,40].$}
\label{figsim2L1p200}
\end{center}
\end{figure}

\medskip

We finally present results from Simulation 2 in which we consider samples with size $n=1000,$ of variable $10 Z,$ with $d=200.$ The contamination level is $\epsilon=0.05$ and 50 initialization points were considered for the $k$-means and $k$-medians algorithms. Since $Z$ has been multiplied by a factor 10, the minimum of the $L_1$ error is now around 13.6. We remark, as before, that because of the dimension of the data, $d=200,$ PAM is outperformed by the $k$-means and the $k$-medians even in the presence of a small fraction of outliers ($\epsilon=0.05$). The minimum of the $L_1$ error for the $k$-medians estimator is again very stable for $c \in [5,25]$ with smaller values than the $L_1$ error of the $k$-means clustering.

\medskip

As a first conclusion, it appears that for large dimensions the $k$-medians can give results which are much better than PAM in terms of empirical $L_1$ error. We can also note that the averaged recursive $k$-medians is not very sensitive to the choice of parameter $c$ provided its value is not too far from the minimum value of the $L_1$ error. Thus we only consider, in the following subsection, the data-driven version of our averaged algorithm described in Section~\ref{ssec:stepsizes} in which the value of $c$ is chosen automatically, its value being  the empirical  $L_1$ error of the  recursive $k$-means.
This data-driven $k$-medians algorithm can be summarized as follows
{\em \begin{enumerate}
\item Run the $k$-means algorithm  and get the estimated centers.
 \item Set $c$ as the value of the $L_1$ error of the $k$-means, evaluated  with formula~(\ref{def:emprisk}).
 \item Run the averaged $k$-medians defined by (\ref{def:algosto1}),  (\ref{def:anr}) and (\ref{def:avekmed}), with $c$ computed in Step 2 and $c_\alpha=1.$
\end{enumerate} 
}

\subsection{Classification Error Rate}

We now make comparisons in terms of classification error measured by the Classification Error Rate (CER) introduced by \cite{ChipTib2005} and defined as follows. For a given partition $P$ of the sample,  let $1_{P(i,i')}$ be an indicator for whether partition $P$ places observations $i$ and $i'$ in the same group. Consider a partition $Q$ with the true class labels, the CER for partition $P$ is defined as 
\begin{eqnarray}
\mbox{CER} &=&  \frac{2}{n(n-1)} \sum_{i > i'} \left| 1_{P(i,i')} - 1_{Q(i,i')} \right| .
\label{def:CER}
\end{eqnarray}
The CER equals 0 if the partitions $P$ and $Q$ agree perfectly whereas a high value indicates disagreement.  Since PAM, the  $k$-means and  our algorithm are not designed to detect  outliers automatically, we only evaluate the CER  on the non-outlying pairs of elements $i$ and $i'$ of the sample.

\begin{figure}[htbp]
\begin{center}
\includegraphics[height=10cm,width=14cm]{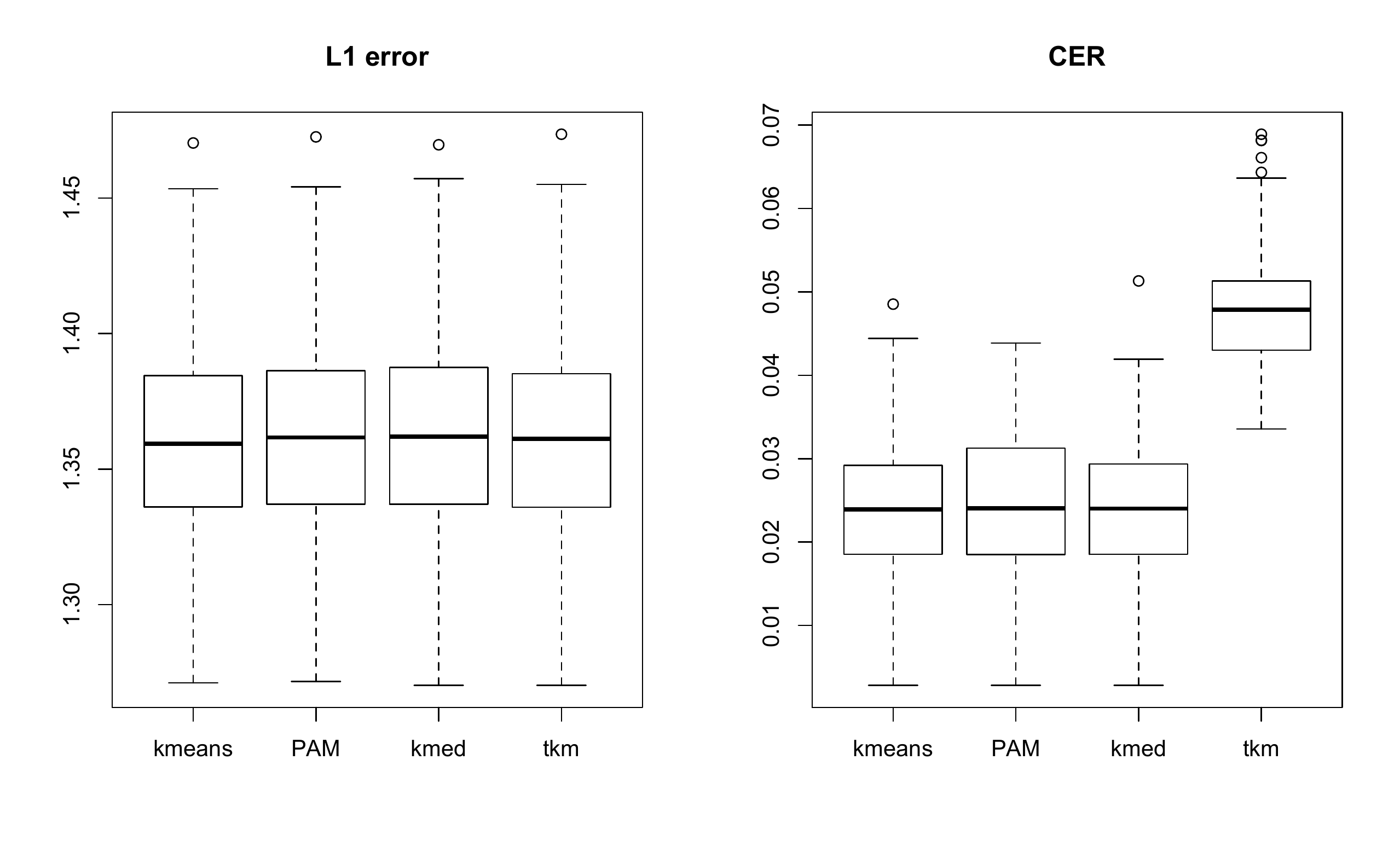} 
\caption{Simulation 1 with $\epsilon=0$ and $n=500.$ On the left, the $L_1$ empirical error. On the right, CER for the $k$-means, PAM, the data-driven recursive $k$-medians  algorithm (kmed) and the trimmed $k$-means (tkm).}
\label{figsim1L1n500}
\end{center}
\end{figure}

We present in Figure~\ref{figsim1L1n500}, results for 500 replications of Simulation 1, with a sample size $n=500$ and no outliers ($\epsilon=0$). We note, in this small dimension context with no contamination, that the $L_1$ errors  are comparable.
Nevertheless, in terms of CER, the PAM, the $k$-means and  the data-driven $k$-medians algorithms have approximately the same performances. For the trimmed $k$-means, the results are not as effective, since this algorithm automatically classifies 5\% of the elements of the sample as outliers.

\begin{figure}[htbp]
\begin{center}
\includegraphics[height=10cm,width=14cm]{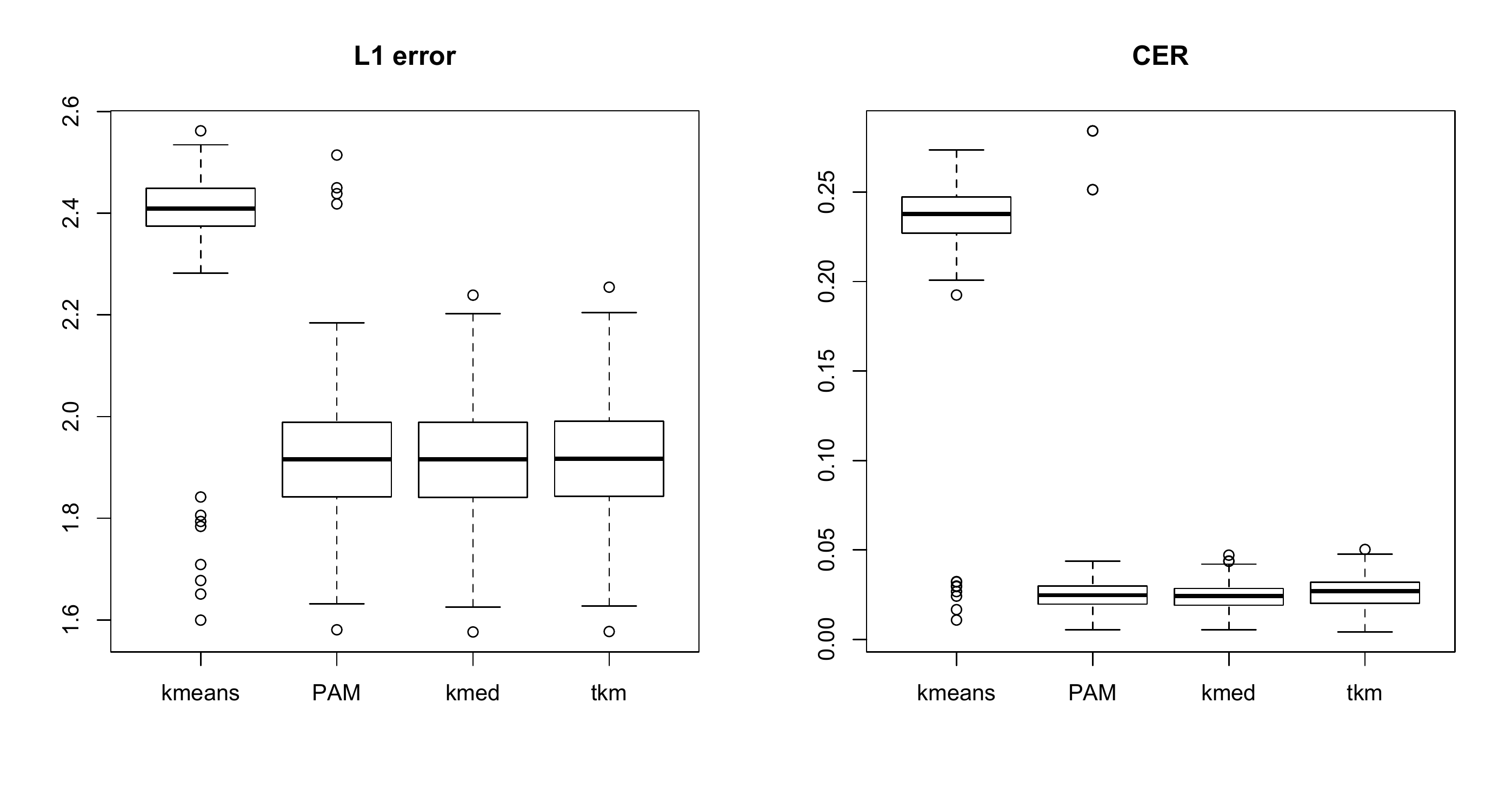} 
\caption{Simulation 1 with $\epsilon=0.05$ and $n=500.$ On the left, the $L_1$ empirical error. On the right, CER for the $k$-means, PAM, the data-driven recursive $k$-medians  algorithm (kmed) and the trimmed $k$-means (tkm).}
\label{figsim1L1n500d05}
\end{center}
\end{figure}

We then consider the same experiment as before, the only difference being that there is now a fraction of $\epsilon=0.05$ of outliers. The results are presented in Figure~\ref{figsim1L1n500d05}. The $k$-means algorithm is clearly affected by the presence of outliers and both its $L_1$ error and its CER are now much larger than for the other algorithms. PAM and the recursive $k$-medians have similar performances, even if PAM is slightly better. The trimmed $k$-means now detects the outliers  and also has  good  performances.
If the contamination level increases to $\epsilon=0.1$, as presented in Figure~\ref{figsim1L1n1000d1}, then PAM and the trimmed $k$-means (with a trimming coefficient $\alpha=0.05$) are strongly affected in terms of CER and do not recover the true groups. The $k$-medians algorithm is less affected by this larger level of contamination. Its median CER is nearly unchanged, meaning that for at least 50 \% of the samples, it gives a correct partition.

\begin{figure}[htbp]
\begin{center}
\includegraphics[height=10cm,width=14cm]{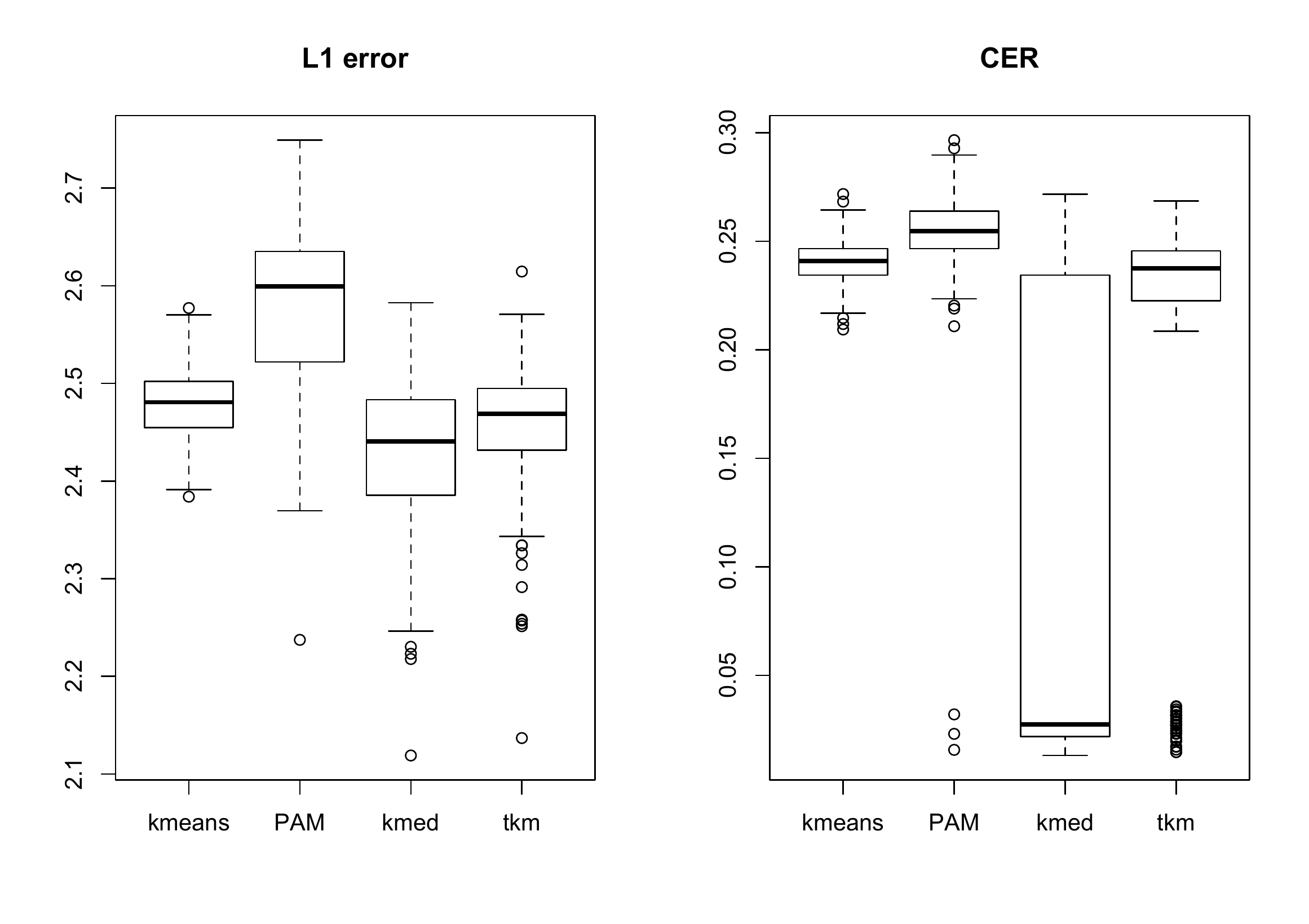} 
\caption{Simulation 1 with $\epsilon=0.1$ and $n=1000.$ On the left, the $L_1$ empirical error. On the right, CER for the $k$-means, PAM, the data-driven recursive $k$-medians  algorithm (kmed) and the trimmed $k$-means (tkm).}
\label{figsim1L1n1000d1}
\end{center}
\end{figure}

We now consider Simulation 2, with a dimension $d=50$ and a fraction $\epsilon=0.05$ of outliers.  The $L_1$ empirical errors and the CER, for sample sizes $n=500,$ are given in Figure~\ref{figsim2L1d05n500}. It clearly appears that PAM has the largest $L_1$ errors and the trimmed $k$-means and the data-driven $k$-medians the smallest ones. Intermediate $L_1$ errors are obtained for the $k$-means. In terms of CER, the partitions obtained by the $k$-means and PAM are not effective and do not recover well the true partition in the majority of the samples. The trimmed $k$-means and our algorithm always perform well and have similar low values in terms of  CER.

\begin{figure}[htbp]
\begin{center}
\includegraphics[height=10cm,width=14cm]{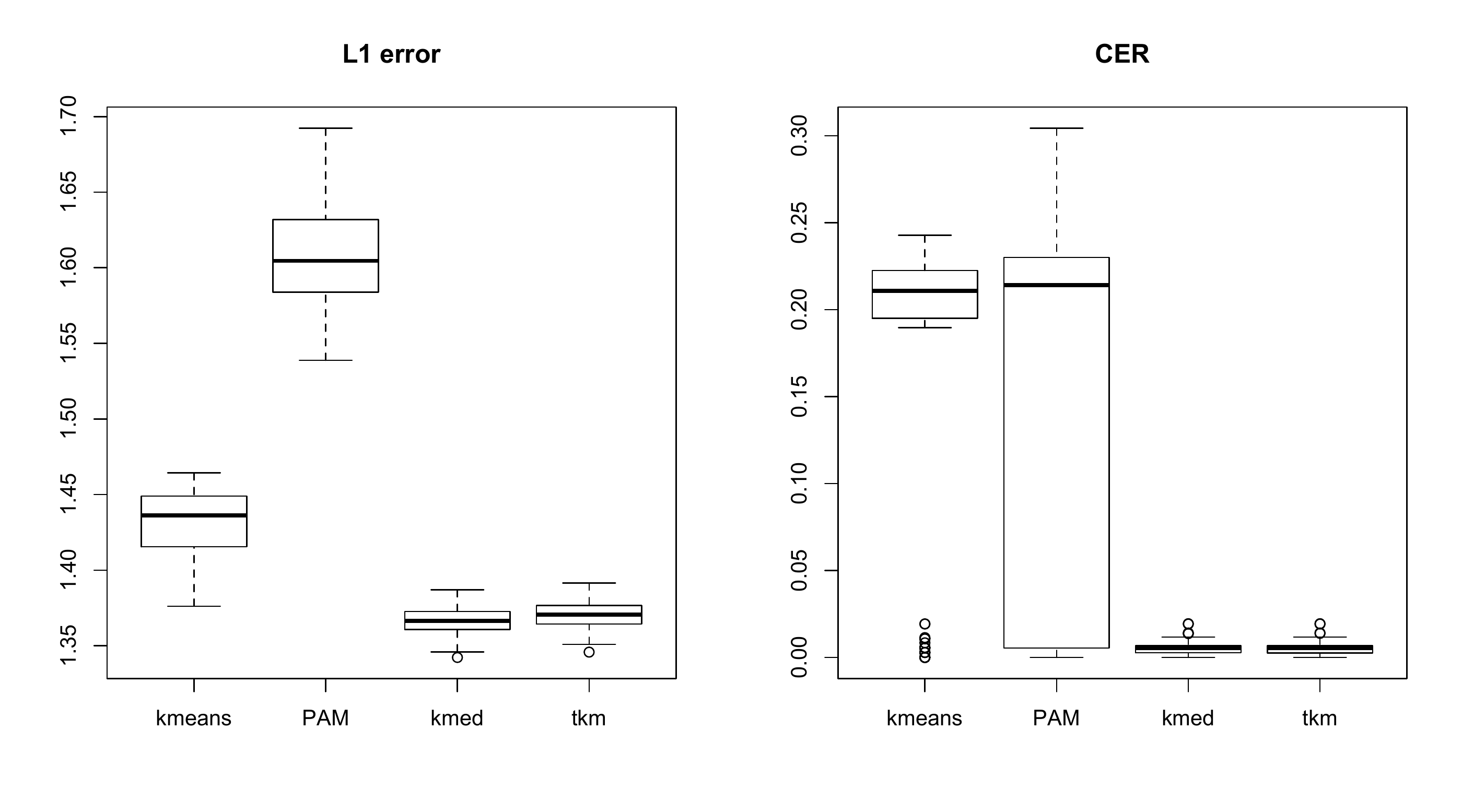} 
\caption{Simulation 2 with $\epsilon=0.05,$ $n=500$ and $d=50.$ On the left, the $L_1$ empirical error. On the right, CER for the $k$-means, PAM, the data-driven recursive $k$-medians  algorithm (kmed) and the trimmed $k$-means (tkm).}
\label{figsim2L1d05n500}
\end{center}
\end{figure}

\subsection{Computation time}

The \Rlogo \ codes of all the considered estimation procedures call \texttt{C} routines and provide the same output. Mean computation times, for 100 runs, various sample sizes and numbers of clusters are reported in Table \ref{table:temps}. They are based on one initialization point. From a computational point of view, the recursive $k$-means based on the MacQueen algorithm as well as the  averaged stochastic $k$-medians algorithm  are always faster than the other algorithms and the gain increases as the sample size gets larger. For example, when $k=5$ and $n=2000,$ the stochastic $k$-medians  is approximately 30 times faster than the trimmed $k$-means and 350 times faster than the PAM algorithm. The data-driven recursive $k$-medians has a computation time of approximately the sum of the computation time of the recursive $k$-means and the stochastic $k$-medians. This also means that when the allocated computation time is fixed and the dataset is very large, the data-driven recursive $k$-medians can deal with sample sizes that are 15 times larger than the trimmed $k$-means and 175 times larger than the PAM algorithm. 

\begin{table}[htdp]
\caption{Comparison of the mean computation time in seconds, for 100 runs,  of the different estimators for various sample sizes and number of clusters $k.$ The computation time are given for one initialization point. }
{\small \begin{center}
\begin{tabular}{|c|ccc|ccc|ccc|} \hline
 &  & n=250 &   &  & n=500 &  &   & n=2000 &  \\ 
Estimator & k=2 & k=4 & k=5 & k=2 & k=4 & k=5 & k=2 & k=4 & k=5 \\ \hline
 $k$-medians & 0.33 & 0.35 & 0.36  & 0.45 & 0.47 & 0.48  & 1.14 & 1.25 & 1.68 \\ \hline
PAM & 1.38 & 3.34 & 4.21 & 5.46  & 15.12  & 20.90 & 111 & 302.00 &  596.00 \\ \hline
Trimmed $k$-means &  2.20 & 5.45 & 6.76  & 5.32 &11.19 &  13.48 & 22.97  & 42.72 &   51.00 \\ \hline
 MacQueen & 0.21 & 0.29 & 0.31 & 0.25 & 0.43 & 0.50  & 0.60 & 1.38 &  1.76 \\ \hline
\end{tabular}
\end{center}
}
\label{table:temps}
\end{table}

When the sample size or the dimension increases, the computation time is even more critical. For instance, when $d=1440$ and $n=5422$ as in the example of Section~\ref{sec:Mediametrie},  our data-driven recursive $k$-medians estimation procedure is  at least 500 times faster than the trimmed $k$-means.  It takes 5.5 seconds for the sequential $k$-means to converge and then about 3.0 seconds for the averaged $k$-medians, whereas  it takes more than 5700 seconds for the trimmed $k$-means.

\section{Determining television audience profiles with $k$-medians}\label{sec:Mediametrie}

The M\'ediam\'etrie company provides every day the official estimations of television audience in France. Television consumption can be measured both in terms of how long  people watch each channel and when they watch television. M\'ediam\'etrie  has  a panel of about 9000 individuals equipped at home with sensors that are able to record  and send the audience of the different television channels. Among this panel, a sample of around 7000 people is drawn every day and the television consumption  of the people belonging to this sample is sent to M\'ediam\'etrie at night, between 3 and 5 am. Online clustering techniques are then interesting to determine automatically, the number of clusters being fixed in advance,  the main profiles of viewers and then  relate these profiles to socio-economic variables.  
In these samples, M\'ediam\'etrie  has noted the presence of  some atypical behaviors so that robust techniques may be helpful.


\begin{figure}[htb]
\begin{center}
\includegraphics[height=10cm,width=14cm]{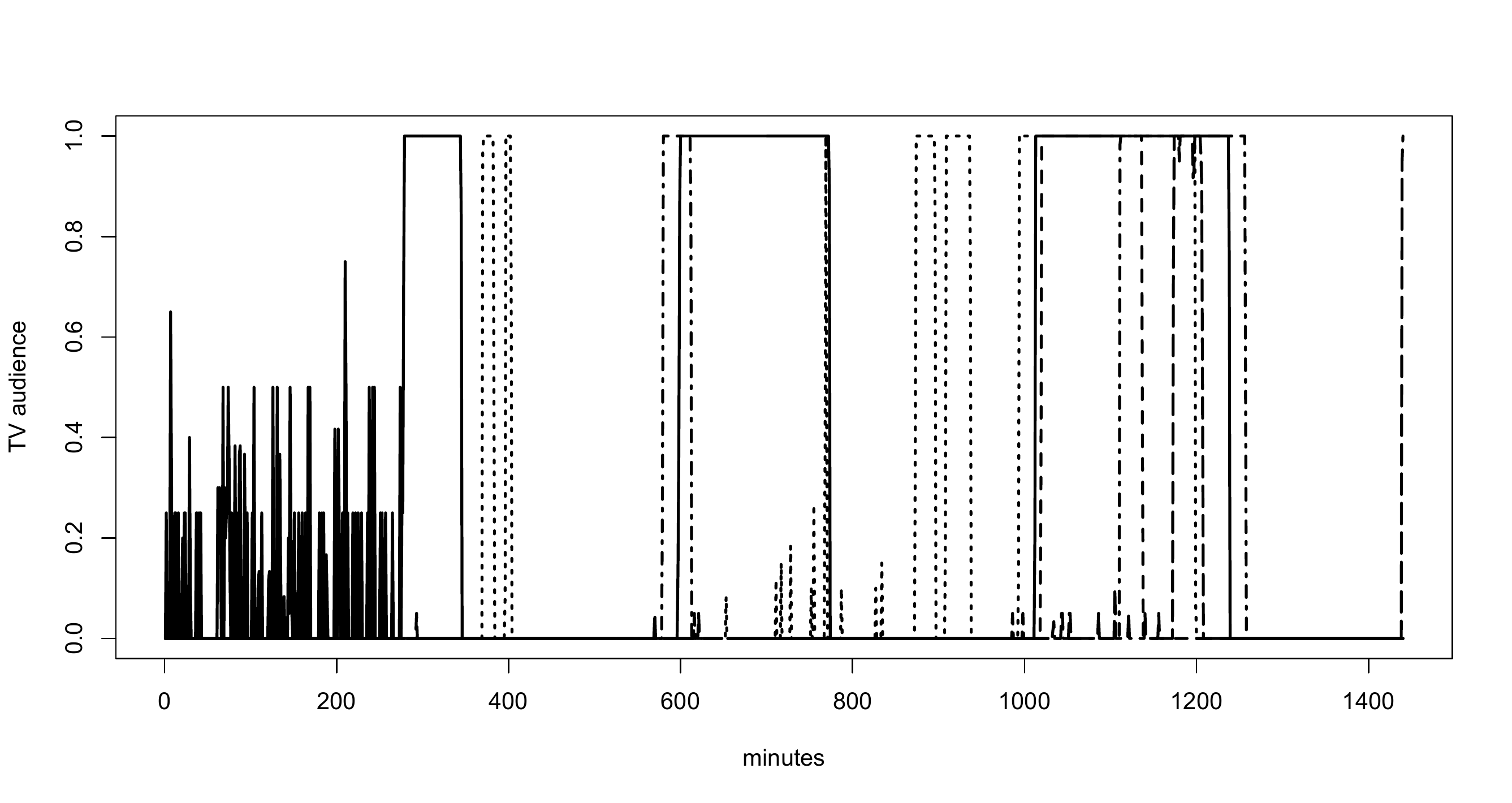} 
\caption{A sample of 5 observations of individual audience profiles measured every minute over a period of 24 hours.}
\label{fig3}
\end{center}
\end{figure}

We are interested in building profiles of the evolution along time of the total audience for people who watched television at least one minute on the 6th September 2010. About 1600 people, among the initial sample whose size is around 7000,  did not  watch television at all  this day, so that we finally get a sample of $n=5422$  individual audiences, aggregated along all television channels and measured every minute over a period of 24 hours. An  observation $Z_i$ is a vector belonging to $[0,1]^d,$ with $d=1440,$ each component giving the fraction of time spent watching television during the corresponding minute of the day.  A sample of 5 individual temporal profiles is drawn in Figure \ref{fig3}.
Clustering techniques based on medoids and representative elements of the sample (\textit{e.g.} PAM, CLARA and CLARANS) are not really interesting in this context since they will return centers of the form of the profiles drawn in Figure \ref{fig3} which are, in great majority, constituted of 0 and 1 and are consequently difficult to interpret.  Furthermore, the dimension being very large, $d=1440,$ these algorithms which do not consider all the dimensions of the data, as seen in the simulation study, will lead to a minimum value of the empirical $L_1$  error (\ref{def:emprisk}) that will be substantially larger than for the $k$-means and our recursive $k$-medians. Indeed, at the optimum,  the value of the $L_1$ empirical error is 0.2455 for the $k$-medians, 0.2471 for the $k$-means and  0.2692 for PAM.

The cluster centers, estimated with our averaged algorithm for $k=5,$ with a parameter value selected automatically, $c=0.2471,$ and 100 different starting points, are drawn in Figure \ref{fig4}. They have been ordered in decreasing order according to the sizes of the partitions and labelled  Cl.1 to Cl.5. Cluster 1 (Cl.1) is thus the  largest cluster and it contains about 35\% of the elements of the sample. It corresponds to individuals that do not watch television much during the day, with a cumulative  audience of about 42 minutes. At the opposite, cluster 5, which represents about 12\% of the sample, is associated to high audience rates during nearly all the day with a cumulative audience of about 592 minutes. Clusters 2, 3 and 4 correspond to intermediate consumption levels and can be distinguished according to whether  the audience occurs during the evening or at night. For example Cluster 4, which represents  16\% of the sample, is related to people watching television late at night, with a cumulative audience of about 310 minutes.

 \begin{figure}[htb]
\begin{center}
\includegraphics[height=12cm,width=14cm]{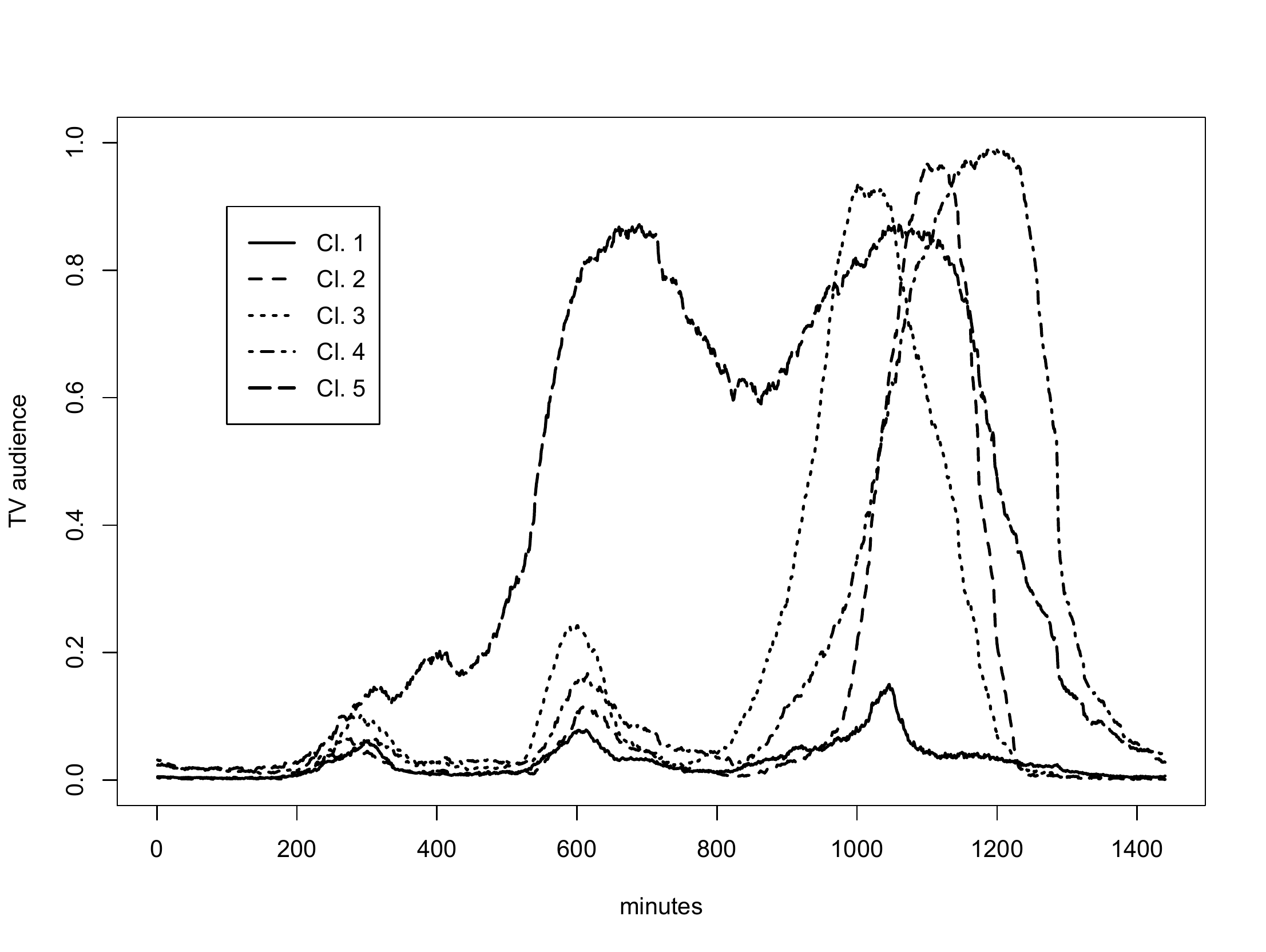} 
\caption{Cluster centers for temporal television audience profiles measured every minute over a period of 24 hours.}
\label{fig4}
\end{center}
\end{figure}

\clearpage
\section*{Appendix : Proof of Theorem~\ref{thm-principal}}

The proof of Theorem~\ref{thm-principal} relies on the following light version of the main theorem in \cite{Monnez}, section 2.1. The proof of Theorem~\ref{thm-principal} consists in checking that all the conditions of the following theorem are satisfied.

\begin{thm}[Monnez (2006)] 
Assuming
\begin{itemize}
\item[$(A1a)$] $g$ is a non negative function;
\item[$(A1b)$] There exists a constant $L>0$ such that, for all $n \geq 1$,
\[g(X_{n+1})-g(X_n)\leq \langle X_{n+1}-X_n,\nabla g(X_n)\rangle+L\nrm{X_{n+1}-X_n}^2 \quad \mbox{a.s.};\]
\item[$(A1c)$] The sequence $(X_n)$ is almost surely bounded and $\nabla g$ is continuous almost everywhere on the compact set containing $(X_n)$;
\item[$(A2)$] There exists four sequences of random variables $(B_n)$, $(C_n)$,$(D_n)$ and $(E_n)$ in $\Rset^{+}$ adapted to the sequence $(\CF_n)$ such that a.s.:
\item[$(A2a)$] $\nrm{\sqrt{A_n}\E[V_n|\CF_n]}^2\leq B_n g(X_n)+C_n$ and $\sum_{n=1}^{\infty}(B_n+C_n)<\infty$;
\item[$(A2b)$] $\E[\nrm{A_nV_n}^2|\CF_n]\leq D_n g(X_n)+E_n$ and $\sum_{n=1}^{\infty}(D_n+E_n)<\infty$;
\item[$(A3)$] $\sup_n a_n^r < \min(\frac12, \frac{1}{4L})$ a.s., $\sum_{n=1}^{\infty}\max_r a_n^r=\infty$ a.s. and
$$ \sup_{n}\frac{\max_ra_n^r}{\min_r a_n^r}<\infty \quad \mbox{a.s.}$$
then the distance of $X_n$ to the set of stationary points of $g$ converges almost surely to zero.
\end{itemize}
\end{thm}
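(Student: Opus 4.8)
\emph{Overall strategy.} The plan is to treat this as a stochastic-approximation statement and prove it in two phases: a Lyapunov phase that produces an almost-supermartingale whose behaviour is governed by the Robbins--Siegmund theorem, and an identification phase in which the accumulated gradient information drives $X_n$ towards $\CS=\{x:\nabla g(x)=0\}$.

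\emph{First phase.} Starting from the descent inequality $(A1b)$ and the recursion $X_{n+1}-X_n=-A_n\nabla g(X_n)-A_nV_n$, I would write
\[g(X_{n+1})-g(X_n)\le-\nrm{\sqrt{A_n}\nabla g(X_n)}^2-\langle A_nV_n,\nabla g(X_n)\rangle+L\nrm{X_{n+1}-X_n}^2,\]
then take $\E[\,\cdot\mid\CF_n]$, using that $X_n$ and $A_n$ are $\CF_n$-measurable. The cross term becomes $-\langle\sqrt{A_n}\E[V_n\mid\CF_n],\sqrt{A_n}\nabla g(X_n)\rangle$, which I bound by Young's inequality and $(A2a)$ by $\tfrac14\nrm{\sqrt{A_n}\nabla g(X_n)}^2+B_ng(X_n)+C_n$. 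For the quadratic term I use $\nrm{X_{n+1}-X_n}^2\le2\nrm{A_n\nabla g(X_n)}^2+2\nrm{A_nV_n}^2$, the diagonal bound $\nrm{A_n\nabla g(X_n)}^2\le(\max_r a_n^r)\nrm{\sqrt{A_n}\nabla g(X_n)}^2$, and $(A2b)$ on $\E[\nrm{A_nV_n}^2\mid\CF_n]$. Since $\sup_n a_n^r<\tfrac1{4L}$ in $(A3)$ forces $2L\max_r a_n^r<\tfrac12$, collecting terms yields
\[\E[g(X_{n+1})\mid\CF_n]\le g(X_n)\bigl(1+B_n+2LD_n\bigr)+\bigl(C_n+2LE_n\bigr)-\tfrac14\nrm{\sqrt{A_n}\nabla g(X_n)}^2.\]
Because $g\ge0$ by $(A1a)$ and $\sum_n(B_n+C_n+D_n+E_n)<\infty$ by $(A2a)$--$(A2b)$, the Robbins--Siegmund almost-supermartingale theorem gives that $g(X_n)$ converges almost surely to a finite limit and that $\sum_n\sum_r a_n^r\nrm{\nabla_r g(X_n)}^2<\infty$ almost surely.

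\emph{Second phase.} From $\nrm{\nabla g(X_n)}^2=\sum_r\nrm{\nabla_r g(X_n)}^2$ and $\sum_r a_n^r\nrm{\nabla_r g(X_n)}^2\ge(\min_r a_n^r)\nrm{\nabla g(X_n)}^2$, I get $\sum_n(\min_r a_n^r)\nrm{\nabla g(X_n)}^2<\infty$; together with $\sum_n\max_r a_n^r=\infty$ and the ratio bound $\sup_n(\max_r a_n^r)/(\min_r a_n^r)<\infty$ from $(A3)$, which force $\sum_n\min_r a_n^r=\infty$, this yields $\liminf_n\nrm{\nabla g(X_n)}=0$. To upgrade this to $\lim_n\nrm{\nabla g(X_n)}=0$ I argue by contradiction: if $\limsup_n\nrm{\nabla g(X_n)}\ge2\delta>0$ there are infinitely many crossing blocks $I_j=[m_j,M_j]$ with $\nrm{\nabla g(X_{m_j})}\le\delta$, $\nrm{\nabla g(X_{M_j})}\ge2\delta$ and $\nrm{\nabla g(X_n)}\ge\delta$ throughout $I_j$. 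Uniform continuity of $\nabla g$ on the compact set of $(A1c)$ then forces $\nrm{X_{M_j}-X_{m_j}}\ge\rho$ for some $\rho>0$, while $\nrm{X_{M_j}-X_{m_j}}$ is at most the drift displacement $\sum_{I_j}\nrm{A_n\nabla g(X_n)}$ plus the noise displacement $\sum_{I_j}\nrm{A_nV_n}$. Since $\sum_n\sum_r a_n^r\nrm{\nabla_r g(X_n)}^2<\infty$ and $\nrm{\nabla g}\ge\delta$ on $I_j$, we have $\sum_{I_j}\min_r a_n^r\to0$, hence $\sum_{I_j}\max_r a_n^r\to0$ by the ratio bound, so the drift displacement vanishes; the noise displacement vanishes because the martingale part $\sum_n A_n(V_n-\E[V_n\mid\CF_n])$ converges almost surely (its conditional square sum is finite by $(A2b)$ and the boundedness of $g(X_n)$) and the conditional-bias part is controlled by Cauchy--Schwarz through $(A2a)$. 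This contradicts $\nrm{X_{M_j}-X_{m_j}}\ge\rho$, so $\nrm{\nabla g(X_n)}\to0$. Every limit point of the bounded sequence $(X_n)$ is then a zero of $\nabla g$, and by compactness $\mathrm{dist}(X_n,\CS)\to0$ almost surely.

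\emph{Main obstacle.} The delicate step is the passage from $\liminf=0$ to $\lim=0$, and within it the accumulation of noise over a crossing block: one must show that neither the martingale increments nor the conditional-bias increments $\E[V_n\mid\CF_n]$ build up to a non-negligible displacement, which is precisely where the square-summability encoded in $(A2a)$--$(A2b)$ and the step-size regularity of $(A3)$ are indispensable. A secondary technical point is that $(A1c)$ only grants continuity of $\nabla g$ almost everywhere, so the uniform-continuity estimate and the ``every limit point is stationary'' conclusion must be carried out on the full-measure continuity set, with the exceptional null set discarded using the absolute continuity of the laws involved.
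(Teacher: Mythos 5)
The paper itself contains no proof of this statement: it is quoted as ``Theorem 2'' from Monnez (2006) (\cite{Monnez}), and the Appendix only \emph{verifies} its hypotheses $(A1)$--$(A3)$ for the $k$-medians recursion in order to deduce Theorem~1. So there is no in-paper argument to compare yours against, and your proposal has to be judged on its own. Your first phase is correct and is the standard route: the constants check out ($-1+\tfrac14+2L\max_r a_n^r\le-\tfrac14$ once $\sup_n a_n^r<\tfrac1{4L}$), and Robbins--Siegmund does deliver a.s.\ convergence of $g(X_n)$ and of $\sum_n\sum_r a_n^r\nrm{\nabla_r g(X_n)}^2$, which is exactly the intermediate conclusion the paper's Theorem~1 records under the weaker hypothesis set.

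The genuine gap is in your second phase, precisely where you invoke ``uniform continuity of $\nabla g$ on the compact set of $(A1c)$.'' Hypothesis $(A1c)$ grants only \emph{almost-everywhere} continuity, and your proposed repair---working on the full-measure continuity set and discarding a null set via absolute continuity of the laws---fails twice over. First, nothing in $(A1)$--$(A3)$ makes the law of $X_n$ absolutely continuous; that is a feature of the specific $k$-medians application (where $X_1$ and $Z$ have densities), not of the abstract theorem you are proving. Second, even if the trajectory avoided the Lebesgue-null discontinuity set almost surely, the restriction of $\nabla g$ to its continuity set need not be uniformly continuous (the exceptional set can be dense), so the implication ``$\nrm{X_{M_j}-X_{m_j}}$ small $\Rightarrow$ $\nrm{\nabla g(X_{M_j})-\nabla g(X_{m_j})}<\delta$'' that drives your crossing-block contradiction is simply unavailable; the same problem recurs at the end, where $\nabla g(X_{n_k})\to0$ and $X_{n_k}\to x^*$ yield $\nabla g(x^*)=0$ only if $\nabla g$ is continuous at $x^*$. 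Monnez's actual proof is built to tolerate the a.e.-continuity hypothesis, which is exactly what your sketch does not do. A smaller, fixable slip: your drift-displacement estimate tacitly assumes $\sup_n\nrm{\nabla g(X_n)}<\infty$, which a.e.\ continuity on a compact set does not supply; replace it by Cauchy--Schwarz,
\[
\sum_{n\in I_j}\nrm{A_n\nabla g(X_n)}\le\Bigl(\sum_{n\in I_j}\max_r a_n^r\Bigr)^{1/2}\Bigl(\sum_{n\in I_j}\sum_{r=1}^k a_n^r\nrm{\nabla_r g(X_n)}^2\Bigr)^{1/2},
\]
both factors of which you have already shown tend to zero along the blocks.
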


\begin{proof}[Proof of Theorem~\ref{thm-principal}] 
\ \\

Let us now check that all the conditions in Theorem 2 are fulfilled in our context.

\textbf{Step 1: proof of $(A1b)$}\\ 

Let $A=X_n$ and $B=X_{n+1}$. Since $X_n$ is absolutely continuous with respect to Lebesgue measure, $\sum_{r=1}^kI_r(Z;A)=1$ a.s. and one gets
$$g(B)=\E\left[\min_r \nrm{Z-B^r}\right]=\E\left[\sum_{r=1}^kI_r(Z;A)\min_j \nrm{Z-B^j}\right],$$
and it comes
$$g(B)\leq \sum_{r=1}^k\E\left[I_r(Z;A)\nrm{Z-B^r}\right],$$
which yields
$$g(B)-g(A)\leq \sum_{r=1}^k \E\left[I_r(Z;A)\left(\nrm{Z-B^r}-\nrm{Z-A^r}\right)\right].$$
The application $x\mapsto \nrm{z-x^r}$ is a continuous function whose gradient 
\[\nabla_r \nrm{z-x^r}=\frac{x^r-z}{\nrm{x^r-z}}\]
is also continuous for $x^r \neq z$. Then almost surely for $d\geq 2$, there exists $C^r=A^r+\mu^r(B^r-A^r)$, $0\leq \mu^r\leq 1$, such that
\[\nrm{Z-B^r}-\nrm{Z-A^r}=\langle B^r-A^r, \nabla_r \nrm{Z-C^r}\rangle.\]
Consequently for all $d\geq 2$,
\[g(B)-g(A) \leq \sum_{r=1}^k\E\left[I_r(Z;A)\langle B^r-A^r,\nabla_r\nrm{Z-C^r}\rangle\right],\]
so that
\begin{eqnarray*}
g(B)-g(A)&\leq& \sum_{r=1}^k\E\left[I_r(Z;A)\langle B^r-A^r,\nabla_r\nrm{Z-C^r}-\nabla_r\nrm{Z-A^r}\rangle\right]\\
&+& \sum_{r=1}^k\E\left[I_r(Z;A)\langle B^r-A^r,\nabla_r\nrm{Z-A^r}\rangle\right] \egaldef (1)+(2) 
\end{eqnarray*}
On the one hand
\[(2)=\sum_{r=1}^k\langle B^r-A^r,\nabla_rg(A)\rangle=\langle B-A,\nabla g(A)\rangle,\]
and on the other hand
\[(1) \leq \sum_{r=1}^k\nrm{B^r-A^r}\E\left[\nrm{\nabla_r \nrm{Z-C^r}-\nabla_r\nrm{Z-A^r}}\right],\]
hence since
\[\nrm{\nabla_r\nrm{Z-C^r}-\nabla_r\nrm{Z-A^r}}=\nrm{\frac{C^r-Z}{\nrm{C^r-Z}}-\frac{A^r-Z}{\nrm{A^r-Z}}}\leq 2 \frac{\nrm{C^r-A^r}}{\nrm{A^r-Z}},\]
one gets, with (H1c)
\[(1)\leq 2 \sum_{r=1}^k\nrm{B^r-A^r}\nrm{C^r-A^r}\E\left[\frac{1}{\nrm{Z-A^r}}\right] \leq 2C\sum_{r=1}^k\nrm{B^r-A^r}^2=2C \nrm{B-A}^2.\]
Consequently, we have
\[g(B)-g(A)\leq \langle B-A, \nabla g(A)\rangle + 2C \nrm{B-A}^2.\]

\medskip

\textbf{Step 2: Proof of the assertion: $\forall n \geq 1$, for all $r=1,...k$, $\nrm{X_n^r} \leq K+2 \sup_n a_n^r$}
\\ 

Let us prove by induction on $n$ that for all $n\in \Nset^{*}$, for all $r=1, \ldots, k$, $\nrm{X_n^r}\leq K+ 2 \sup_{n}a_n^r$. This inequality is trivial for the case $n=1$: $\nrm{X_1^r}\leq K$. Let $n\in \Nset^{*}$ such that $\nrm{X_n^r}\leq K+ 2 \sup_{n}a_n^r$, $\forall r \in \{1, \ldots, k\}$. Let $r\in \{1, \ldots, k\}$. First we assume that $\nrm{X_n^r}\leq K+ a_n^r$. Then it comes 
\[\nrm{X_{n+1}^r}\leq \nrm{X_{n}^r}+a_n^rI_r(Z_n;X_n)\leq \nrm{X_{n}^r}+a_n^r\leq K +2a_n^r.\]
Now in the case when $K+a_n^r<\nrm{X_n^r}\leq K+ 2 \sup_{n}a_n^r$, one gets
\[\nrm{X_{n}^r}>K+a_n^r\geq \nrm{Z_n}+a_n^r,\] 
and then 
\[\nrm{X_n^r-Z_n}\geq \abs{\nrm{X_n^r}-\nrm{Z_n}}>a_n^r.\]
Since for $I_r(Z_n;X_n)=0$, $X_{n+1}^r=X_n^r$, it remains to deal with the unique index $r$ such that $I_r(Z_n;X_n)=1$. In that case,
\[X_{n+1}^r=X_n^r-a_n^r\frac{X_n^r-Z_n}{\nrm{X_n^r-Z_n}}=\left(1-\frac{a_n^r}{\nrm{X_n^r-Z_n}}\right)X_n^r+a_n^r\frac{Z_n}{\nrm{X_n^r-Z_n}}.\]
By (H1b) and from the inequalities $a_n^r/\nrm{X_n^r-Z_n}<1$ and $\nrm{Z_n}\leq K < \nrm{X_n^r}$, we have, 
\[\nrm{X_{n+1}^r}< \left(1-\frac{a_n^r}{\nrm{X_n^r-Z_n}}\right)\nrm{X_n^r}+a_n^r\frac{\nrm{X_n^r}}{\nrm{X_n^r-Z_n}}=\nrm{X_n^r},\]
which leads to $\nrm{X_{n+1}^r}\leq K+2\sup_{n}a_n^r$ and concludes the proof by induction.\\

\medskip

\textbf{Step 3: Proof of $(A1c)$}\\

From the integral form
\[\frac{\partial g}{\partial x^{r}_j}(x)=\int_{\Rset^d\setminus \{x^r\}}I_r(z;x)\frac{x^{r}_j-z_j}{\nrm{z-x^r}}f(z)dz,\]
it is easy to see that $\frac{\partial g}{\partial x^{r}_j}$ is a continuous function of $x$.

\medskip
\textbf{Step 4: Proof of $(A2a)$}\\

\noindent The definition of $V_n^r$ implies that $\E[V_n^r|\CF_n]=0$ and hence  $\E[V_n|\CF_n]=0$.

\medskip

\textbf{Step 5: Proof of $(A2b)$}\\
\begin{eqnarray*}
\E\left[\nrm{A_nV_n}^2|\CF_n\right]&=&\sum_{r=1}^k\E\left[\left(a_n^r\right)^2\nrm{V_n^r}^2|\CF_n\right]\\
&\leq & \sum_{r=1}^k\left(a_n^r\right)^2\E\left[I_r(Z_n;X_n)\frac{\nrm{X_n^r-Z_n}^2}{\nrm{X_n^r-Z_n}^2}\Big|\CF_n\right]\\
&\leq & \sum_{r=1}^k(a_n^r)^2.
\end{eqnarray*}
Hence assuming (H3), one gets
\[\E\left[\sum_{n=1}^{\infty}\E\left[\nrm{A_nV_n}^2|\CF_n\right]\right] < \infty.\]
In the case when (H3') holds instead of (H3), one has
\[\E\left[\sum_{n=1}^{\infty}\E\left[\nrm{A_nV_n}^2|\CF_n\right]\right]\leq \sum_{n=1}^{\infty}\sum_{r=1}^k\E\left[(a_n^r)^2I_r(Z_n;X_n)\right]<\infty.\]
Consequently,
\[\sum_{n=1}^{\infty}\E\left[\nrm{A_nV_n}^2|\CF_n\right]<\infty \quad \mbox{a.s},\]
which concludes the proof.
\end{proof}

\medskip

\noindent \textbf{Acknowledgements.} We thank the anonymous referees for their valuable suggestions. We also thank the  M\'ediam\'etrie company for allowing us to illustrate our sequential clustering technique  with their data.


\end{document}